\theoremstyle{plain}
\newtheorem*{lemmawithoutnumber}{Lemma}
\newtheorem*{theoremwithoutnumber}{Theorem}
\newtheorem{definition}{Definition}
\newtheorem{lemma}{Lemma}
\newtheorem{theorem}{Theorem}
\newtheorem{corollary}{Corollary}
\newcommand{\prob}[1]{\textsc{#1}}
\newcommand{\MACD}{\prob{Maximum Alternating Cycle Decomposition}\xspace}
\newcommand{\macd}{\prob{MACD}\xspace}
\newcommand{\SSBR}{\prob{Signed Sorting By Reversals}\xspace}
\newcommand{\ssbr}{\prob{SSBR}\xspace}
\newcommand{\SBR}{\prob{Sorting By Reversals}\xspace}
\newcommand{\sbr}{\prob{SBR}\xspace}
\newcommand{\MLS}{\prob{Minimum Cost Scenario}\xspace}
\newcommand{\MLPS}{\prob{Minimum Cost Parsimonious Scenario}\xspace}
\newcommand{\mlps}{\hyperref[probMLPS]{\prob{MCPS}}\xspace}
\newcommand{\MISA}{\prob{Maximum Independent Set of Arcs}\xspace}
\newcommand{\misa}{\prob{MISA}\xspace}
\newcommand{\eg}{\emph{e.g.}\xspace}
\title{A framework for cost-constrained genome rearrangement under Double Cut and Join}
\author[1]{Pijus Simonaitis}
\author[1,2]{Annie Chateau}
\author[1,2]{Krister M. Swenson}
\affil[1]{LIRMM, CNRS -- Universit\'e Montpellier\\
  161 rue Ada, 34392 Montpellier, France\\}
\affil[2]{Institut de Biologie Computationnelle (IBC)\\
  Montpellier, France}
\date{}       
\begin{document} 

\maketitle

\begin{abstract} 

The study of genome rearrangement has many flavours, but they all are somehow
tied to edit distances on variations of a multi-graph called the breakpoint
graph.
We study a weighted 2-break distance on Eulerian 2-edge-colored multi-graphs,
which generalizes weighted versions of several Double Cut and Join problems,
including those on genomes with unequal gene content.
We affirm the connection between cycle decompositions and edit scenarios
first discovered with the \SBR problem.  Using this we show that 
the problem of finding a parsimonious scenario of minimum cost on an Eulerian
2-edge-colored multi-graph -- with a general cost function for 2-breaks --  can
be solved by decomposing the problem into independent instances on simple
alternating cycles.
For breakpoint graphs, and a more constrained cost function, based on coloring
the vertices, we give
a polynomial-time algorithm for finding a parsimonious 2-break scenario of
minimum cost, while showing that finding a non-parsimonious 2-break scenario of
minimum cost is NP-Hard.

\end{abstract}

\section{Introduction}

Edit distance problems are a mainstay in computer science.  On strings, edit
distances have roots in computational linguistics, and are at the heart of many
approximate string matching algorithms for signal processing and text
retrieval~\cite{navarro2001guided}.
The Levenshtein distance is the classic example, which asks for a minimum number of
insertions, deletions, or substitutions of characters needed to transform one
string into another.
On graphs, the first edit distance to be considered is an analogue to the
Levenshtein distance; insertion and deletion of vertices and edges are allowed,
along with vertex and edge label
substitution~\cite{sanfeliu1983distance}.
Graph edit distances have become an important tool in pattern recognition, which
is at the heart of modern image processing and computer vision~\cite{bunkegraph,gao2010survey}.

From early on, there were major applications for string edit distances in
computational biology, mainly due to the linear nature of DNA, RNA, and protein
molecules
~\cite{needleman1970general,sankoff1972matching,kruskal1983time,gusfield1997algorithms}.
More recently, graph edit distances have found a role in the comparison of gene
regulatory networks~\cite{mcgrane2016biological}.
In the next subsection we outline the pervasive role of edit distances in genome
rearrangements.

In this paper we address a graph edit distance on Eulerian 2-edge-colored
multi-graphs, that is, a multi-graph with black and gray edges
such that every vertex is incident to the same number of black and gray edges.
When we say \emph{graph}, we will mean Eulerian 2-edge-colored multi-graph.
The edit operations that we consider are the \emph{k-break}, where $k$ black
edges are replaced such that the degree of all vertices in the graph are
conserved.
For 2-breaks, we generalize the edit distance problem to consider costs on
the operations, posed as the \MLPS (\mlps) problem.
This problem asks for a minimum-length scenario of 2-breaks transforming the
set of black edges into the set of gray edges.

Our main contribution is a clean formalism that facilitates simple proofs
for edit distances on 2-breaks with costs.
While weighted edit operations have been considered in the past, to our
knowledge, this is the first study of weighted 2-breaks in this general
setting~\cite{benderGHHPSS2008, Farnoud(Hassanzadeh):2012:SPC:2331863.2332291, Swenson2016,simonaitis_et_al:LIPIcs:2017:7660}.

In Section~\ref{decompositionOFscenario} we show that a k-break scenario
on a graph $G$ partitions $G$ into a set of Eulerian subgraphs, such that no
k-break operates on edges from different subgraphs.
This decomposition theorem allows us to show a strong link between \MACD
and the length of a parsimonious 2-break scenario for a graph.
It is also the cornerstone for Section~\ref{generalmlps}, showing that
\mlps can be computed on a graph $G$ if there is a method for computing \mlps
on a circle.
By \emph{circle} we mean a graph where all vertices are incident to exactly
one gray and one black edge (Section~\ref{generalmlps}).

These results are general in the sense that the variants of the breakpoint
graph typically used for genome rearrangement problems are all 
specific instance of the Eulerian 2-edge-colored multi-graph.
Section~\ref{secEditDistancesAndRearrangements} gives such examples.

Section~\ref{colorcost} is dedicated to a specific cost function that
depends on a coloring of the vertices:  if a 2-break replaces edges $(a,b)$ and
$(c,d)$ with $(a,d)$ and $(c,b)$, then the cost is zero if $a$ and $c$ have the
same color, or $b$ and $d$ have the same color, otherwise it is of cost 1.
Our tool for reasoning with this cost function is called the \emph{color-merged graph}, which is obtained from a graph by merging all vertices of the same
color.
Theorem~\ref{mcsnphard} states that it is NP-Hard to compute \MLS, even for
circles.
After establishing the necessary links between cycle decompositions in
a graph and its merged graph (Section~\ref{junctiondecomposition}), we show
how to compute \mlps on a circle (Sections~\ref{circlecost} and
\ref{secPolyMCPS}).
Finally, we show that \MLPS for the colored cost function is computable in
$O(n^4)$ time.

\subsection{Edit distances and breakpoint graphs}
\label{secEditDistancesAndRearrangements}
In the area of gene order comparison through gene rearrangements, both the
string and graph edit distances play a central
role~\cite{fertin2009combinatorics}.
The typical genome rearrangement problem is an edit distance problem
on strings of genes called \emph{genomes}, where each gene occurs exactly once.
The first biologically motivated edit operation that was studied is the
\emph{reversal} of a substring of a genome~\cite{watterson1982,Sankoff92}.
When the relative direction of gene transcription is known, this problem is
called \SSBR (\ssbr), and is intimately linked to the \emph{breakpoint
graph}~\cite{kececioglu1993exact,bafna1993genome}.
Roughly speaking, this multi-graph has a vertex for each gene extremity, and for
every pair of adjacent gene extremities there is an edge.
Thus, the edges for a single genome constitute a perfect matching.
Say there is a gray genome and a black genome, then we have a gray matching
and a black matching in the breakpoint graph.
In this case the graph is 2-regular, and therefore decomposes into disjoint
cycles of alternating color.
A reversal on the black genome replaces two adjacencies.
In this way, \ssbr can be seen as a graph edit problem where
the graph edit operation is a replacement of 2 edges, simulating the reversal
of a substring.
The \ssbr problem can be solved in
polynomial-time~\cite{hannenhalli1999transforming}.

When relative transcription directions are unknown for the genes, we have the
\SBR (\sbr) problem.
In \sbr, since the orientation of gene extremities are unknown, the vertices
for the two extremities of each gene are merged into a single vertex in the
breakpoint graph.
The result is a 4-regular graph.
While finding the alternating-color cycle decomposition of the 2-regular
breakpoint graph is trivial, finding the same for a 4-regular graph is not:
Caprara showed that \sbr is as hard as finding a \MACD (\macd), and that 
\macd is NP-Hard~\cite{caprara1997sorting}.

%Mention the vertex clicking problem?

Currently the most mathematically clean model for genome rearrangement is
called the \emph{Double Cut and Join} (DCJ)
model~\cite{DCJ,Bergeron2006}.
Genome extremities that are adjacent are paired, and transformations of these
pairs occur by swapping elements of the pairs.
DCJ is a generalization of the \ssbr paradigm, since a DCJ operation on a
genome can simulate a reversal.
This implies that DCJ is a less restrictive graph edit distance model, where
edge pair $\{(a,b), (c,d)\}$ is replaced by either $\{(a,c), (b,d)\}$ or
$\{(a,d), (b,c)\}$ (in \ssbr only one of the two replacements would be a valid
reversal).
The DCJ edit distance is inversely proportional to the number of cycles in the
breakpoint graph.

While much of the work on genome rearrangement is on models where exactly one
occurrence of each gene exists in each genome, content modifying operations
like gene insertion, deletion, and duplication have been considered.
%Problems for finding distances in the presence of duplications are almost all
%hard; given two genomes with at most two copies of each gene, finding a common
%subsequence such that exactly one copy of each gene is present, is
%NP-Hard~\cite{blin2009exemplar,jiang2011zero}.
Approaches to these problems are inextricably tied to finding cycle decompositions on breakpoint graphs~\cite{el2000genome,braga2011double,shao2015exact,shao2015comparing}.
%Essentially, each of these problems can boil down to a 2-edge-colored (with
%black and gray edges) multi-graph, such that every vertex is incident to the
%same number of black and gray edges.
In the 2-break model, the \MACD for each of these graphs implies the 2-break
distance.
Figure~\ref{figBPGraph} shows breakpoint graphs (see
Section~\ref{mlpsbreakpoint} for a definition) for genomes with unequal
gene content.
Our results on 2-breaks, up to and including Section~\ref{secPolyMCPS}, are
directly applicable to the DCJ model, even in the presence of content modifying
operations.

\begin{figure}[]
\centering
\includegraphics[scale = 1]{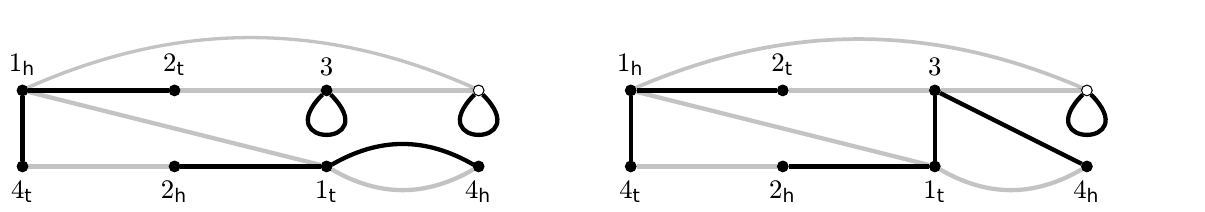} 
\caption{Examples of breakpoint graphs; $G(A,B)$ on the left and $G(A',B)$ on the right, where
$A=\{\{1_{h},2_{t}\},\{2_{h},1_{t}\},\{1_{h},4_{t}\},\{4_{h},1_{t}\}\}$,
$A'=\{\{1_{h},2_{t}\},\{2_{h},1_{t}\},\{1_{h},4_{t}\},\{4_{h},3\},\{3,1_{t}\}\}$, and
$B=\{\{3\},\{3,2_{t}\},\{2_{h},4_{t}\},\{4_{h},1_{t}\},\{1_{h}\}\}$.
Edges adjacent to a special vertex $\circ$ represent the endpoints of linear
chromosomes (\eg gray edges $\{3,\circ\}$ and $\{1_h,\circ\}$), while missing
genes are represented by self-loops (\eg the black edge $(3,3)$ and the gray edge
$(1_h,1_t)$), called \emph{ghost adjacencies} in~\cite{Shao2012}.
In the circular genomes $A$ and $A'$, gene $1$ is repeated twice, and the DCJ transforming $A$ into $A'$ is an insertion of an unoriented gene $3$, corresponding to the 2-break $G(A,B)\rightarrow G(A',B)$. A DCJ scenario transforming $A'$ into the linear genome $B$ includes a deletion of one copy of a gene 1.}
\label{figBPGraph}
\end{figure}

\subsection{Distinguishing features of our model}
\label{secDistFeatures}
We have previously worked on DCJ problems with weight functions~\cite{Swenson2016,simonaitis_et_al:LIPIcs:2017:7660}.
These papers address the 2-break model for 2-regular breakpoint graphs with a
cost function on the edges instead of the vertices.
The differences between the vertex-cost model from this paper, and edge-cost
model of previous work are subtle.
Indeed, the algorithmic results from Sections~\ref{circlecost},\ref{secPolyMCPS}, and~\ref{mlpsbreakpoint} are
similar to those in~\cite{Swenson2016}, but bear the hallmark of this vertex
colored model: simplicity.
This simplicity facilitates the much more general nature of this work.

From a practical perspective, the vertex-cost model we present here sidesteps
the weakest feature of the edge-cost model: the cost function.
In the edge-cost model, each DCJ on a particular pair of adjacencies
has two possible states.
Which state is chosen determines the cost of subsequent DCJs on these
adjacencies.
In this way, the edge-cost function is state dependent.

For our vertex-cost model, the natural bijection between vertices representing
the same extremity in the two genomes allows for a stateless cost function.
This way, the weight of a DCJ can be computed as function of both genomes
instead of just one.
We also speculate that the computation of a general cost function under the
vertex-cost model will be simpler.
The only known polynomial time algorithm for the edge-cost model is restricted
to a color based cost function.

%The work on the general cost function for \MLPS (in Section~\ref{generalmlps}),
%however, is novel.
\begin{comment}
WE NEED TO MENTION:
\begin{itemize}
  \item the generalization to median problems?
  \item weighted rearrangements? (no methods exist with content modifying ops)
\end{itemize}
\end{comment}

\section{k-breaks on a 2-edge-colored graph}

\subsection{Definitions}
In our work a \emph{graph} will be an Eulerian 2-edge-colored undirected multi-graph $G=(V,E^{b}\cup E^{g})$ with black and gray edges. We set $e(G)=|E^{b}|=|E^{g}|$. 
 
\begin{definition}[Eulerian graphs and alternating cycles] 
For a graph $G$ and a vertex $v$ we set $\bar{d}(G,v)=d^{b}(G,v)-d^{g}(G,v)$ where $d^{b}$ and $d^{g}$ are black and gray degrees of $v$.
$G$ is \emph{Eulerian} if $\bar{d}(G,v)=0$ for every vertex $v$. 
A cycle is \emph{alternating} if it is Eulerian. 
All the cycles in our work will be alternating unless specified otherwise.
The \emph{length} of a cycle is its number of black edges.
\end{definition}

\begin{definition}[k-break] A \emph{k-break} is a transformation of a graph $G$ into $G'$ that replaces $k$ black edges $(x_{1},x_{2}),\ldots, (x_{2k-1}, x_{2k})$ by $(x_{q_{1}},x_{q_{2}}),\ldots, (x_{q_{2k-1}}, x_{q_{2k}})$ while preserving the degree of all the vertices of $G$.  In other words, $d^{b}(G,v)=d^{b}(G',v)$ for all $v$ and the multi-sets $\{x_{q_{1}}, \ldots, x_{q_{2k}}\}$ and $\{x_{1}, \ldots, x_{2k}\}$ are equal.
\end{definition}

Since $G$ is Eulerian, it admits a decomposition into edge-disjoint alternating cycles. 
We denote the size of a \MACD (\macd) by $c(G)$ .   
If $c(G)=1$ we say that $G$ is a \emph{simple cycle}. All the cycles in a \macd are simple.
A graph with equal multi-sets of black and gray edges will be called \emph{terminal}. 
A \emph{scenario} will be a sequence of k-breaks transforming a graph into a terminal graph.

\subsection{Cycle decompositions for a k-break scenario}
\label{decompositionOFscenario}

To a k-break scenario we will associate a cycle decomposition such that all the edges replaced by a k-break belong to a single cycle.
In Section~\ref{generalmlps} this will allow us to concentrate on the scenarios for the simple cycles instead of the general graphs.

We say that two edges are \emph{$l$-related} according to a given scenario if among the first $l$ k-breaks of the scenario there is one replacing these two edges.
$P_{l}$ will be a set of the equivalence classes of the transitive closure of this relation. 
We start by constructing $P_{l}$, which is a partition of the black edges of a graph, and proceed by showing how the gray edges can be incorporated to obtain a cycle decomposition of a graph.   

%agglomerative framework:
Take a k-break scenario of length $m$ for a graph $G=G_{0}$. 
We denote the graph after $l\geq 0$ k-breaks by $G_{l}$.
We label the $G_{0}$ black edges by the set $\{1,\ldots, e(G)\}$ and partition them into singletons $P_{0}=\{\{1\},\ldots,\{e(G)\}\}$. 
The $l$th k-break replaces $k$ black edges labeled $e_{1}, \ldots, e_{k}$ of $G_{l-1}$ by $k$ black edges 
that we arbitrarily label $e_{1},\ldots,e_{k}$.
Let $Q_{i}$ be the subset of a partition $P_{l-1}$ including the edge $e_{i}$.
We obtain a new partition $P_{l}$ from $P_{l-1}$ by merging all $Q_i$ into the set $\bigcup_{i=1}^{k} Q_{i}$. 
After applying all $m$ merges of a scenario we obtain a partition $P_{m}$ of the set $\{1,\ldots, e(G)\}$.

We label the gray edges of $G_{m}$ by $\{e(G)+1, \ldots, 2e(G)\}$, keeping the same labeling for all $G_{l}$.
Since $G_{m}$ is terminal, its multi-sets of black and gray edges are equal.
This implies a one-to-one mapping of labels between $\{1, \ldots, e(G)\}$ and $\{e(G)+1, \ldots, 2e(G)\}$ such that corresponding labels have the same endpoints in $G_{m}$. 
Using this mapping, we produce a partition on $\{1,\ldots, 2e(G)\}$ called $R_l$, that includes all black edge labels $e_i \in P_l$ along with the gray edge label that $e_i$ maps to.
We will show that every subset of $R_{m}$ defines an Eulerian subgraph of $G$.
stack
For a subset $Q\subset\{1,\ldots, 2e(G)\}$ and $l\in\{0,\ldots,m\}$,
we define $s(G_{l}, Q)$ to be the edge-induced subgraph of $G_{l}$ having only the edges of $G_{l}$ labeled by elements of $Q$. 
By construction, for $Q\in R_{m}$ and a vertex $v$ we have $\bar{d}(s(G_{m},Q),v)=0$. 
Lemma~\ref{degrees}, proven in the Appendix, establishes the equality $\bar{d}(s(G_{0},Q),v)=\bar{d}(s(G_{m},Q),v)$, which means that $s(G,Q)$ is Eulerian.    
For a scenario $\rho$ $C(\rho) = R_{m}$ denotes the \emph{cycle decomposition} of the scenario $\rho$.

\begin{lemma}
\label{degrees}
$\bar{d}(s(G_{0},Q),v)=\bar{d}(s(G_{l},Q),v)$ for a vertex $v$ and $Q\in R_{l}$ with $l\in\{0,\dots m\}$.
\end{lemma}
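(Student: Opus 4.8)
The plan is to reduce the statement to a claim about black degrees and then chain a single-step invariance across the $l$ k-breaks. First I would observe that the scenario never touches gray edges: the labels $\{e(G)+1,\ldots,2e(G)\}$ are fixed to the gray edges of $G_m$ and carried back unchanged to every $G_l$. Hence, for a fixed label set $Q$, the gray part of $s(G_l,Q)$ — the edges labeled by gray labels in $Q$ — is literally the same subgraph in $G_0$ and in $G_l$, so $d^{g}(s(G_0,Q),v)=d^{g}(s(G_l,Q),v)$. Consequently the desired equality $\bar{d}(s(G_0,Q),v)=\bar{d}(s(G_l,Q),v)$ is equivalent to the black-degree equality $d^{b}(s(G_0,Q),v)=d^{b}(s(G_l,Q),v)$, and I only need to track black edges.

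The key building block is a single-step invariance. Fix a step $l'$ and let $B$ be the set of black labels $e_1,\ldots,e_k$ that the $l'$-th k-break acts on (recall these same labels are reused for the new edges). I would establish that, for any label set $Q$ with either $B\subseteq Q$ or $B\cap Q=\emptyset$, one has $d^{b}(s(G_{l'-1},Q),v)=d^{b}(s(G_{l'},Q),v)$ for every $v$. The case $B\cap Q=\emptyset$ is immediate, since the black edges labeled by $Q$ are untouched and hence identical in $G_{l'-1}$ and $G_{l'}$. In the case $B\subseteq Q$, the two subgraphs differ only in the $k$ replaced edges; but the defining property of a k-break is that the multiset of endpoints $\{x_1,\ldots,x_{2k}\}$ is preserved, so the number of endpoints equal to $v$ among the removed edges equals the number among the inserted edges, and the total black-degree contribution at every $v$ is unchanged.

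Next I would show that every $Q\in R_l$ satisfies this dichotomy simultaneously for all preceding steps $l'\le l$. This is where the construction of the partitions does the work: when the $l'$-th k-break merges the classes $Q_1,\ldots,Q_k$ containing $e_1,\ldots,e_k$, all of $B$ lands in the single class $\bigcup_{i} Q_i$ of $P_{l'}$, and since each subsequent step only merges classes, $B$ remains inside a single class of $P_l$, and hence of $R_l$. Because the classes of $R_l$ are disjoint, the class $Q$ either equals the class containing $B$ (so $B\subseteq Q$) or is disjoint from it (so $B\cap Q=\emptyset$). Thus the single-step hypothesis holds at every $l'\in\{1,\ldots,l\}$, and chaining the single-step equalities gives $d^{b}(s(G_0,Q),v)=d^{b}(s(G_l,Q),v)$. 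Combined with the gray-degree reduction, this yields the lemma; the base case $l=0$ is the empty chain.

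I expect the main obstacle to be articulating the alignment step cleanly, \ie making precise that a class of $R_l$ respects every earlier k-break. The subtlety is that the labels are reused across steps, so one must argue in terms of the invariant label set $B$ rather than the edges themselves; once this is phrased via the monotone coarsening $P_{l'}\to P_l$ of the partitions, the remainder is bookkeeping.
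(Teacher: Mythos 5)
Your proof is correct. It rests on the same two facts as the paper's argument---a k-break preserves the multiset of endpoints of the edges it replaces, and the partition construction guarantees that the labels touched by any k-break end up inside a single class---but you organize the argument differently. The paper proceeds by induction on $l$ with the hypothesis quantified over all classes of $R_{l-1}$: in the inductive step it must decompose the newly merged class $Q$ into its constituent classes $Q_{q_1},\ldots,Q_{q_r}$ of $R_{l-1}$ and sum $\bar{d}$ over them before invoking the hypothesis on each piece. You instead fix a single $Q\in R_{l}$ and chain a per-step invariance over $l'=1,\ldots,l$, justified by the observation that the monotone coarsening from $P_{l'}$ to $P_{l}$ forces the label set $B$ of step $l'$ to be either contained in or disjoint from $Q$; this removes the need for the decomposition-and-summation step and for carrying an induction hypothesis over all classes at once. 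Your explicit reduction to black degrees (the gray labels being pinned to the static gray edges of $G_m$) is implicit in the paper, which works with $\bar{d}$ directly. Both arguments have the same substance; yours is arguably the cleaner bookkeeping, and the dichotomy you isolate is exactly the point worth making precise.
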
   

\begin{theorem}
\label{minlength2break}
The minimum length of a 2-break scenario $l(G)$ is equal to $e(G)-c(G)$.
\end{theorem}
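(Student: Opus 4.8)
The plan is to prove the two inequalities $l(G)\ge e(G)-c(G)$ and $l(G)\le e(G)-c(G)$ separately, following the ``count the cycles'' paradigm familiar from the \sbr and DCJ distance formulas, adapted to our general Eulerian 2-edge-colored setting.

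For the lower bound, the key observation is that a single 2-break changes the size of a maximum decomposition by at most one; concretely, $c(G')\le c(G)+1$ whenever $G'$ is obtained from $G$ by one 2-break. To see this I would start from an optimal decomposition of $G'$ into $c(G')$ alternating cycles and \emph{undo} the 2-break on this decomposition. If the two black edges created by the 2-break lie in two distinct cycles, undoing them merges those cycles into one Eulerian subgraph of $G$, yielding a decomposition of $G$ into $c(G')-1$ alternating cycles; if they lie in the same cycle, undoing splits it into one or two Eulerian subgraphs, yielding at least $c(G')$ cycles. Either way $c(G)\ge c(G')-1$. Degree preservation, exactly as recorded in Lemma~\ref{degrees}, is what guarantees that the modified subgraphs remain Eulerian, hence genuine alternating cycles. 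Next I observe that a terminal graph $T$ has $c(T)=e(T)=e(G)$: pairing each black edge with its identical gray edge produces $e(G)$ length-one cycles, and no decomposition can have more cycles than black edges. Telescoping $c(G_{l+1})\le c(G_l)+1$ along any scenario $G=G_0\to\dots\to G_m=T$ then gives $e(G)=c(T)\le c(G)+m$, so $m\ge e(G)-c(G)$.

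For the upper bound I would exhibit a scenario of exactly this length. I fix a \macd of $G$ into $c(G)$ simple cycles; these are connected, edge-disjoint, and each 2-break I use acts within a single one of them, so the individual sorting scenarios concatenate without interference and their union is terminal. It therefore suffices to sort one connected alternating cycle of length $\ell$ in $\ell-1$ 2-breaks, which I would do by induction on $\ell$ using an alternating closed Eulerian trail $v_0\xrightarrow{b}v_1\xrightarrow{g}v_2\cdots\xrightarrow{g}v_0$ (whose existence for a connected alternating graph is classical). A length-one cycle is already terminal; for $\ell>1$ I choose a gray edge $(v_{2i-1},v_{2i})$ and apply the 2-break to the two black edges $(v_{2i-2},v_{2i-1})$ and $(v_{2i},v_{2i+1})$ flanking it, replacing them by $(v_{2i-1},v_{2i})$ and $(v_{2i-2},v_{2i+1})$. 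The new edge $(v_{2i-1},v_{2i})$ matches its gray counterpart and splits off a sorted length-one cycle, while the remaining edges trace a single connected alternating trail of length $\ell-1$ to which the induction hypothesis applies. Summing over the decomposition gives a scenario of length $\sum_i(\ell_i-1)=e(G)-c(G)$, so $l(G)\le e(G)-c(G)$, and the two bounds match.

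The routine parts are the terminal-graph count and the explicit splitting 2-break in the upper bound. I expect the main obstacle to be the $c(G')\le c(G)+1$ step of the lower bound: one must argue carefully that undoing a 2-break always produces a legitimate decomposition of the previous graph into \emph{alternating} (Eulerian) cycles, and correctly track the change in cycle count across the same-cycle versus different-cycle cases. A secondary technical point to pin down is the existence of an alternating Eulerian trail in each simple cycle, which underpins the inductive splitting in the upper bound.
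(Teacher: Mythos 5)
Your proof is correct, but your lower bound takes a genuinely different route from the paper's. The paper argues \emph{forward} along the scenario: it builds the partition $P_0,\dots,P_m$ of the black-edge labels (each 2-break merging at most two classes, so $|P_m|\ge e(G)-m$), attaches the gray edges at the terminal graph, and invokes Lemma~\ref{degrees} to conclude that the resulting classes induce Eulerian subgraphs of $G$, i.e.\ a cycle decomposition of size at least $e(G)-m$. You instead argue \emph{backward}: a single 2-break satisfies $c(G')\le c(G)+1$ (undo the operation on a maximum decomposition of $G'$, merging or re-splitting the one or two affected cycles into an Eulerian subgraph of $G$), and telescoping from $c(T)=e(T)=e(G)$ gives the same bound. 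Both are sound; your case analysis (same cycle versus different cycles) and the observation that the undone subgraph stays Eulerian are the right checks, though citing Lemma~\ref{degrees} for this is a slight misattribution --- that lemma concerns the forward construction, and what you actually need is just that a 2-break preserves all degrees. The trade-off is that the paper's forward construction is reused immediately afterwards: it yields the object $C(\rho)$ and the Corollary that $C(\rho)$ is a \macd for any parsimonious $\rho$, which is the engine behind Theorem~\ref{sumofcosts}; your argument proves the distance formula cleanly but does not by itself associate a decomposition to a given scenario. Your upper bound (split off a length-one cycle from each simple cycle of a \macd, $\ell-1$ steps per cycle of length $\ell$) is essentially identical to the paper's, just with the splitting 2-break written out explicitly.
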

\begin{proof}

We first show that for a 2-break scenario $\rho$ of length $m$ we have $|C(\rho)|\geq e(G)-m$.   
Recall that the size of $C(\rho)$ is equal to $|R_{m}|=|P_{m}|$ where $P_{m}$ is a partition of a set $\{1,\ldots, e(G)\}$ encountered in the construction of $C(\rho)$. 
The $l$th 2-break of $\rho$ replaces two edges $e_{1}$ and $e_{2}$ and merges the subsets $Q_{1}$ and $Q_{2}$ of $P_{l-1}$ (recall that $e_{1} \in Q_{1}$ and $e_{2}\in Q_{2}$) to obtain a partition $P_{l}$. 
By construction, $|P_{l}|\geq |P_{l-1}|-1$ as at most two subsets get merged. 
The size of $P_{0}$ is equal to $e(G)$, thus the size of $P_{m}$ is at least $e(G)-m$, meaning that $c(G)\geq e(G)-l(G)$.    

On the other hand, for any cycle $c$ of length $l > 1$
there is a 2-break transforming $c$ into a union of length 1 and length $l-1$ cycles. 
In this way we obtain a scenario of length $l-1$ for $c$,
and can transform every cycle of a \macd of $G$ independently, obtaining a 2-break scenario of length $e(G)-c(G)$. Thus, $l(G)\leq e(G)-c(G)$. 
\end{proof}

\begin{corollary} $C(\rho)$ for a parsimonious 2-break scenario $\rho$ is a \MACD of $G$.
\end{corollary}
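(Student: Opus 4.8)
The plan is to combine the size lower bound already extracted in the proof of Theorem~\ref{minlength2break} with the maximality that defines $c(G)$, squeezing $|C(\rho)|$ to exactly $c(G)$. First I would recall that the first half of that proof establishes $|C(\rho)|=|R_{m}|\geq e(G)-m$ for \emph{any} 2-break scenario of length $m$. Specializing to a parsimonious scenario, where $m=l(G)=e(G)-c(G)$ by the theorem, this immediately gives $|C(\rho)|\geq c(G)$.

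Next I would argue the reverse inequality. By Lemma~\ref{degrees} every subset $Q\in R_{m}$ induces an Eulerian subgraph $s(G,Q)$, and by construction these subgraphs are edge-disjoint and together cover $E^{b}\cup E^{g}$, so $C(\rho)$ is a decomposition of $G$ into Eulerian subgraphs. Since each Eulerian subgraph itself admits a decomposition into simple alternating cycles, refining every element of $C(\rho)$ produces a genuine alternating cycle decomposition whose size is at least $|C(\rho)|$. By the very definition of $c(G)$ as the size of a \macd, no alternating cycle decomposition can exceed $c(G)$, hence $|C(\rho)|\leq c(G)$.

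The two bounds force $|C(\rho)|=c(G)$, and I would close by observing that equality also pins down the structure. Because the refinement has size between $|C(\rho)|$ and $c(G)$, both equal to $c(G)$, the refinement cannot split any element further: each $s(G,Q)$ must already be a single simple alternating cycle. Thus $C(\rho)$ is an alternating cycle decomposition of $G$ of maximum size $c(G)$, \ie a \macd.

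The step I expect to be the main obstacle is the second one, namely certifying that $C(\rho)$ is not merely a collection of Eulerian subgraphs but one in which each piece is a single cycle. The degree bookkeeping of Lemma~\ref{degrees} only guarantees that each piece is Eulerian; ruling out a piece that is a union of two or more cycles is exactly what the maximality of $c(G)$ buys us, and getting the direction of that comparison right --- a finer decomposition is never smaller, yet can never beat $c(G)$ --- is the crux of the argument.
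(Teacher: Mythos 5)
Your argument is correct and is precisely the reasoning the paper intends (the corollary is stated without an explicit proof, following from the bound $|C(\rho)|\geq e(G)-m$ in the proof of Theorem~\ref{minlength2break} together with Lemma~\ref{degrees}): the squeeze $c(G)\leq|C(\rho)|\leq c(G)$ via refinement into simple cycles, and the observation that tightness forces each Eulerian piece to already be a single simple cycle, is exactly the right closing step. No gaps.
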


\section{\MLPS}
\label{generalmlps}

Consider a non-negative cost function $\varphi$ for the 2-breaks on $V$.
By \emph{2-breaks on} $V$ we mean the set of all the 2-breaks on the complete graph with vertices $V$.
The cost of a scenario on a graph $G=(V,E^{b}\cup E^{g})$ is the sum of the costs of its 2-breaks.
We provide an example of a cost function $\varphi$ in Section~\ref{colorcost}.
The \MLPS (\mlps) for a graph under cost function $\varphi$ is a minimum cost scenario among the scenarios for $G$ of minimum length. 
$\mlps_{\varphi}(G)$ denotes the cost of a \mlps for a graph $G$ and a cost function $\varphi$.

\begin{theorem}
\label{sumofcosts}
%$\mlps_{\varphi}(G)$ is the minimum of $\sum_{c\in C} \mlps_{\varphi}(c)$, for a \macd $C$ of $G$.
$\mlps_{\varphi}(G)$ is the minimum over $\big\{\sum_{c \in C} \mlps_\varphi(c) ~\big|~ $C$~\text{is a \macd of}~$G$ \big\}$.
\end{theorem}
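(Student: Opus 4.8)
The plan is to prove the two inequalities separately, relying on the facts established so far: that a parsimonious 2-break scenario has length exactly $e(G)-c(G)$ (Theorem~\ref{minlength2break}), and that the cycle decomposition $C(\rho)$ of a parsimonious scenario $\rho$ is a \macd (the corollary to Theorem~\ref{minlength2break}). The additional ingredient is the defining property of $C(\rho)$ from Section~\ref{decompositionOFscenario}: every 2-break of $\rho$ replaces two edges lying in a single cycle of $C(\rho)$, so that 2-breaks acting on distinct cycles are independent and may be freely interleaved.

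For the upper bound, I would fix a \macd $C$ attaining the minimum on the right-hand side, and for each simple cycle $c\in C$ choose a \mlps $\rho_{c}$ realising $\mlps_{\varphi}(c)$. Because the cycles of $C$ are edge-disjoint, concatenating the $\rho_{c}$ in any order yields a scenario $\rho$ for $G$. Its length is $\sum_{c\in C}\big(e(c)-1\big)=e(G)-|C|=e(G)-c(G)=l(G)$, where the middle equality uses $c(c)=1$ for a simple cycle and the last uses that $C$ is a \macd; hence $\rho$ is parsimonious. Summing costs over the independent blocks gives $\varphi(\rho)=\sum_{c\in C}\mlps_{\varphi}(c)$, so $\mlps_{\varphi}(G)\le\sum_{c\in C}\mlps_{\varphi}(c)$, which equals the claimed minimum.

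For the lower bound, I would start from a \mlps $\rho$ of $G$, so $\varphi(\rho)=\mlps_{\varphi}(G)$ and $|\rho|=l(G)$. By the corollary, $C(\rho)$ is a \macd. Using the independence property, I split $\rho$ into subscenarios $\rho_{c}$, one per cycle $c\in C(\rho)$, where $\rho_{c}$ collects the 2-breaks of $\rho$ acting inside $c$, kept in their original order; each $\rho_{c}$ is then a valid scenario transforming $c$ into a terminal graph, and $\varphi(\rho)=\sum_{c}\varphi(\rho_{c})$. The length bookkeeping is the crux: $|\rho_{c}|\ge l(c)=e(c)-1$ by Theorem~\ref{minlength2break}, while $\sum_{c}|\rho_{c}|=|\rho|=e(G)-c(G)=\sum_{c}\big(e(c)-1\big)$. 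Equality of the sum with each term bounded below forces $|\rho_{c}|=e(c)-1$ for every $c$, so each $\rho_{c}$ is parsimonious and $\varphi(\rho_{c})\ge\mlps_{\varphi}(c)$. Therefore $\mlps_{\varphi}(G)=\varphi(\rho)=\sum_{c}\varphi(\rho_{c})\ge\sum_{c}\mlps_{\varphi}(c)$, and since $C(\rho)$ is itself a \macd this last sum is at least the minimum, completing the argument.

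The step I expect to require the most care is verifying that the restriction $\rho_{c}$ of a global scenario to a single cycle really is a bona fide scenario for that cycle: that the 2-breaks in $\rho_{c}$ only ever touch edges of $c$, that their intermediate graphs are well defined independently of the discarded 2-breaks, and that the final state restricted to $c$ is terminal. All of this is underwritten by the construction of $C(\rho)=R_{m}$ together with Lemma~\ref{degrees}, but it must be made explicit that the partition $R_{m}$ is respected throughout $\rho$, so that the blocks evolve independently; the length-counting argument that upgrades each $\rho_{c}$ to a parsimonious scenario is then a short consequence.
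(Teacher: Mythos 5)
Your proof is correct and takes essentially the same approach as the paper's: for the lower bound, decompose a minimum-cost parsimonious scenario $\rho$ along its cycle decomposition $C(\rho)$ into per-cycle subscenarios, and for the upper bound, reassemble cycle-wise optimal scenarios from a \macd minimizing the sum. Your explicit length-counting argument ($\sum_{c}|\rho_{c}|=e(G)-c(G)=\sum_{c}(e(c)-1)$ with each $|\rho_{c}|\ge e(c)-1$) makes rigorous the step that each $\rho_{c}$ is parsimonious, which the paper merely asserts.
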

\begin{proof}
Take a \mlps $\rho$ for $G$ and $\varphi$. 
$C(\rho)$ is a \macd of $G$ due to Theorem~\ref{minlength2break}.
The subsequence of $\rho$ consisting of the 2-breaks replacing the edges of a cycle $c\in C(\rho)$ is a parsimonious scenario for $c$ that we name $\rho_{c}$.
A 2-break sequence $\rho'$ obtained by performing $\rho_{c}$ one by one is a parsimonious scenario for $G$.
The costs of $\rho$ and $\rho'$ are equal as they consist of the same 2-breaks performed in different order. 
This way we know that the cost of $\rho$ is smaller or equal to $\sum_{c\in C(\rho)} \mlps_{\varphi}(c)$.
On the other hand, for a \macd $C$ minimizing the sum $\sum_{c\in C} \mlps_{\varphi}(c)$ we can construct a parsimonious scenario of cost $\sum_{c\in C} \mlps_{\varphi}(c)$ transforming each cycle separately. 
\end{proof}

\subsection{\mlps for a simple cycle}
\label{simplecost}

A simple cycle $S$ might have a certain number $d(S)$ of vertices $v$ with $d^{b}(S,v)=d^{g}(S,v)=2$.  It is easy to check that $d^{b}(G,v)=d^{g}(G,v)<3$ for any vertex. 
If $d(S)=0$, then we call $S$ a \emph{circle}. 
See Figure~\ref{simplecycles} for an example of a simple cycle that is not a circle. 

Take a simple cycle $S$ with $d(S)>0$, a cost function $\varphi$ for the 2-breaks on its vertices and a vertex $v_{0}$.
For every Eulerian cycle $(v_{0},\ldots,v_{m-1},v_{0})$ of $S$ we construct a circle $(u_{0},\ldots,u_{m-1},u_{0})$ with a cost function $\varphi'$ defined as follows
for $i,j,k,l\in\{0,\ldots,m-1\}$:
\begin{align*}
\varphi'((u_{i},u_{j}),(u_{k},u_{l})\rightarrow(u_{i},u_{k}),(u_{j},u_{l}))=\varphi((v_{i},v_{j}),(v_{k},v_{l})\rightarrow(v_{i},v_{k}),(v_{j},v_{l})).
\end{align*}
There are no more than $2^{d(S)}$ of such circles. In Figure~\ref{simplecycles} a simple cycle $S$ is given together with its two Eulerian circles.
A simple cycle $S$ is recovered from its Eulerian circle by merging $d(S)/2$ pairs of vertices. 
In the Appendix we prove the following theorem:
\begin{theorem} 
\label{costOFsimple}
The $\mlps_{\varphi}$ cost of a simple cycle $S$ is equal to the minimum over all $\mlps_{\varphi'}$ costs of its Eulerian circles. 
\end{theorem}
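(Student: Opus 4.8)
My plan is to prove the two inequalities separately, using the \emph{folding map} $\pi$ that sends each circle vertex $u_i$ to the simple-cycle vertex $v_i$ occupying position $i$ of the chosen Eulerian circuit. Since a $2$-break only re-pairs the endpoints of black edges, and $\varphi'$ is \emph{defined} so that a $2$-break on $u_i,u_j,u_k,u_l$ costs exactly what the corresponding $2$-break on $v_i,v_j,v_k,v_l$ costs under $\varphi$, the map $\pi$ carries $2$-breaks on a circle to $2$-breaks on $S$ of equal cost, and carries the circle (both its black and its gray edges) onto $S$. Both $S$ and each of its Eulerian circles are single cycles, so by Theorem~\ref{minlength2break} their parsimonious length is $e(S)-1$ in every case; this common length is what lets a length-preserving correspondence between scenarios also preserve parsimony.

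For the inequality $\mlps_{\varphi}(S)\le \min_{S'}\mlps_{\varphi'}(S')$ I would take a circle $S'$ attaining the minimum together with a parsimonious minimum-cost scenario $\rho'$ on it, and push $\rho'$ through $\pi$. Each $2$-break of $\rho'$ projects to a valid $2$-break on $S$ (the re-pairing of the same four endpoints preserves every black degree), the projected initial graph is $S$, and the projected terminal graph is still terminal because equality of the black and gray edge multisets is preserved under the vertex quotient. The projected scenario therefore has length $e(S)-1$ and cost $\mathrm{cost}(\rho')$, so it is a parsimonious scenario for $S$ of cost $\mlps_{\varphi'}(S')$, which gives the bound.

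The reverse inequality is the substantive direction, and I would start from a parsimonious minimum-cost scenario $\rho$ on $S$ and \emph{unfold} it. The key observation is that a $2$-break never creates or destroys a black edge-\emph{end}: each black edge-end stays attached to its vertex for the whole scenario, and a $2$-break merely re-pairs four such ends into two new edges. In the terminal graph the black and gray edge multisets coincide, so I can fix a bijection between the final black edges and the gray edges with the same endpoints; at each degree-$4$ vertex this bijection pairs its two black ends with its two gray ends. Pairing a black end with a gray end is exactly an alternating transition, so splitting every degree-$4$ vertex according to this matching produces an alternating transition system, and hence a decomposition of $S$ into alternating cycles. Because $c(S)=1$, such a decomposition cannot contain two or more cycles, so the split graph is necessarily a \emph{single} circle $S'$; this is the point I expect to be the main obstacle, and it is resolved precisely by the hypothesis that $S$ is a simple cycle. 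Running the same sequence of end-repairings on $S'$ yields a scenario $\rho'$ whose final black edges coincide with the gray edges of $S'$ (by the way the matching was chosen at both endpoints of each edge), so $\rho'$ is a valid scenario on the circle $S'$; it has the same length $e(S)-1=e(S')-1$, hence is parsimonious, and the same cost as $\rho$ by the definition of $\varphi'$. Thus $\mlps_{\varphi'}(S')\le \mlps_{\varphi}(S)$, and combining the two inequalities proves the theorem.
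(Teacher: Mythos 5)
Your proof is correct, and while the easy direction (projecting a circle scenario through the folding map $\pi$) coincides with the paper's, the hard direction takes a genuinely different route. The paper splits one degree-$4$ vertex at a time: it lifts the given parsimonious scenario simultaneously to the two local splittings $S_{1}$ and $S_{2}$ (which share the same black edges throughout the lifted scenario), inspects the possible end states $\hat{S}_{1},\hat{S}_{2}$ by hand --- if $\hat{S}_{1}$ fails to be terminal it must contain a length-$2$ cycle that swaps the two gray ends at the split vertex, which forces $\hat{S}_{2}$ to be terminal --- and then iterates over the remaining degree-$4$ vertices. You instead build the entire unfolding in one shot: the black--gray matching in the terminal graph induces, via the persistent edge-ends, a transition system at every degree-$4$ vertex of $S$ simultaneously, and the hypothesis $c(S)=1$ forces the resulting alternating-cycle decomposition to consist of a single cycle, so the split graph is a single circle, i.e.\ one of the Eulerian circles over which the minimum is taken. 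Your version buys a cleaner argument --- no iteration, no two-case endgame analysis, and the target circle is determined canonically by the scenario rather than discovered by trying both options at each vertex --- at the price of the (correct) global observation that any transition-system splitting of a simple cycle is connected because $c(S)=1$; the paper in fact needs essentially the same observation implicitly when it asserts $c(S_{i})=1$ for its one-vertex splits.
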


Given a subroutine computing $\mlps_{\varphi}$ for a circle we can compute $\mlps_{\varphi}$ for every simple cycle of $G$ using Theorem~\ref{costOFsimple}.
Then $\mlps_{\varphi}(G)$ can be computed by choosing a \macd of $G$ maximizing the sum of the costs of its cycles using Theorem~\ref{sumofcosts}. 
In Section~\ref{colorcost} we define a particular cost function $\varphi$ and provide a polynomial time algorithm computing $\mlps_{\varphi}$ for a circle in Section~\ref{secPolyMCPS}. Then in Section~\ref{mlpsbreakpoint} we show how, using this algorithm as a subroutine, a minimum cost parsimonious DCJ scenario transforming genome $A$ into $B$ can be found in polynomial time. 

\begin{figure}[]
\centering
\includegraphics[width = 0.8\linewidth]{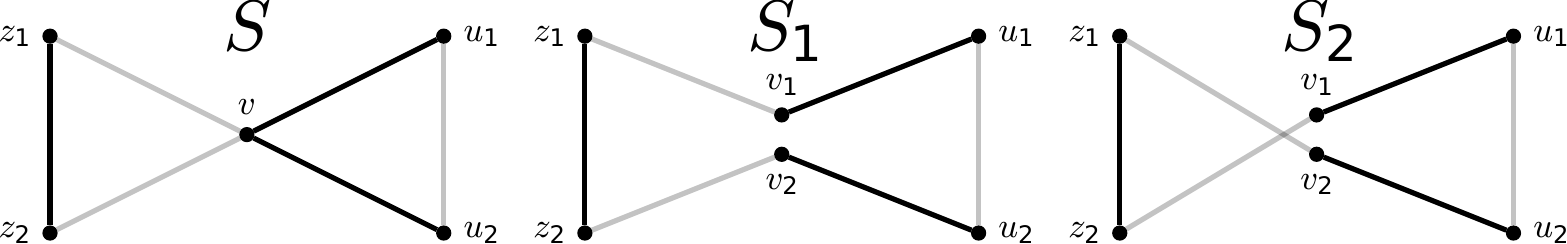} 
\caption{$S$ and its two Eulerian circles $S_{1}$ and $S_{2}$. We recover $S$ by merging $v_{1}$ and $v_{2}$.}
\label{simplecycles}
\end{figure} 

\section{A colored cost for 2-break scenarios}
\label{colorcost}

In this section we partition $G$'s vertices into subsets of different colors so as to define a \emph{colored cost function} $\varphi$ on 2-breaks: 
$\varphi((x_{1},x_{2}),(x_{3},x_{4})\rightarrow (x_{q_{1}},x_{q_{2}}),(x_{q_{3}},x_{q_{4}}))=0$ if 
\begin{align*}
\big\{\{col(x_{1}),col(x_{2})\},\{col(x_{3}),col(x_{4})\}\big\}=\big\{\{col(x_{q_{1}}),col(x_{q_{2}})\},\{col(x_{q_{3}}),col(x_{q_{4}})\}\big\},
\end{align*}
and $\varphi$ is 1 otherwise. 
For example a 2-break $(x,y),(z,t)\rightarrow (x,z),(y,t)$ is of zero cost if $col(x)=col(t)$ or $col(y)=col(z)$, otherwise it is of cost 1. 
This cost function is convenient for use with spacial proximity constraints
given by the packing of the chromosomes into the
nucleus~\cite{pulicani2017rearrangement}.
See Figure~\ref{0} for an example.

In what follows, by a \emph{graph} we will mean a graph together with a coloring $col$ of its vertices and by cost we will mean the cost $\varphi$ obtained using this coloring.
We define a \emph{color-merged graph} $J(G,col)$ obtained by merging the vertices of $G$ of the same color.
For a 2-break $G\rightarrow G'$ a transformation $J(G,col)\rightarrow J(G',col)$ is also a 2-break.
This means that a scenario $\rho$ for $G$ defines a scenario $\rho_{J}$ for $J(G,col)$.
If the cost of $G\rightarrow G'$ is $0$, 
then $J(G,col)=J(G',col)$, and such a move can be omitted from $\rho_{J}$, leaving us with a scenario of length equal to the cost of $\rho$.
This observation leads us to the following theorems proven in the Appendix. 

\vspace{-1mm}
\begin{theorem}
\label{mlsforgraph}
The minimum cost of a scenario for a graph $G$ is equal to $l(J(G,col))$. 
\end{theorem}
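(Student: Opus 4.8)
The plan is to prove the two inequalities separately, writing $J=J(G,col)$ and recalling from Theorem~\ref{minlength2break} that $l(J)=e(J)-c(J)$, with $e(J)=e(G)$ since merging vertices leaves the edge multiset untouched. The whole argument rests on the correspondence stated just before the theorem: every 2-break $G_{i-1}\to G_i$ induces a 2-break $J(G_{i-1})\to J(G_i)$, and a move has cost $0$ \emph{exactly} when it preserves the multiset of color-pairs, which is precisely the condition $J(G_{i-1})=J(G_i)$. Thus a cost-$1$ move induces a nontrivial $J$-move and a cost-$0$ move induces a trivial one, and this equivalence is the engine of both directions.

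For the lower bound I would take an arbitrary scenario $\rho=(G=G_0\to\cdots\to G_m)$ for $G$ and project it to $J$. Deleting the trivial (cost-$0$) moves from the induced sequence leaves a scenario $\rho_J$ that starts at $J(G)=J$, ends at $J$ of a terminal graph (which is terminal, as merging preserves edge-multiset equality), and has length equal to the number of cost-$1$ moves of $\rho$, i.e. to the cost of $\rho$. Hence $\mathrm{cost}(\rho)=|\rho_J|\ge l(J)$, and minimizing over $\rho$ gives that the minimum cost of a scenario for $G$ is at least $l(J)$.

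For the upper bound I would \emph{lift} a parsimonious scenario $J=J_0\to\cdots\to J_k$ with $k=l(J)$ (being minimum length, all its moves are nontrivial). Maintaining the invariant $J(G_i)=J_i$, I build $G=G_0\to\cdots\to G_k$: since the black edges of $J_{i-1}=J(G_{i-1})$ are the images of unique black edges of $G_{i-1}$, the two edges recombined by the $i$-th $J$-move lift to black edges $(a,b),(c,d)$ of $G_{i-1}$, and I apply the 2-break on them realizing the same color recombination (\eg $(a,b),(c,d)\to(a,c),(b,d)$). Its projection is exactly $J_{i-1}\to J_i$, so it is a valid 2-break of cost $1$. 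After these $k$ moves, $J(G_k)=J_k$ is terminal and the accumulated cost is $l(J)$.

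What remains, and what I expect to be the main obstacle, is a \emph{cleanup lemma}: if $J(H,col)$ is terminal then $H$ admits a cost-$0$ scenario; applying it to $G_k$ finishes the upper bound, giving total cost $l(J)$ and hence the theorem (this value being the \mls cost). I would prove the lemma by induction on $e(H)$. Terminality of $J(H)$ means the black and gray edges agree within each color-pair class, while $H$ being Eulerian forces the black and gray degrees to agree at every vertex; so if $H$ is not terminal some color-pair $\{P,Q\}$ carries a surplus black edge $(a,b)$ and a deficient gray edge $(a',b')$ (oriented so $col(a)=col(a')=P$, $col(b)=col(b')=Q$), with positive black degree at $a'$ and $b'$. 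Two cost-$0$ swaps then create a black copy of $(a',b')$: first swap $(a,b)$ with a black edge at $a'$ to produce $(a',b)$ (cost $0$ since $col(a)=col(a')$), then swap $(a',b)$ with a black edge at $b'$ to produce $(a',b')$ (cost $0$ since $col(b)=col(b')$). Deleting this matched black/gray pair yields an Eulerian $H'$ with $e(H')=e(H)-1$ and $J(H')$ terminal, to which induction applies. The delicate part is the bookkeeping guaranteeing genuine progress and handling degenerate configurations (monochromatic color-pairs, loops, and coincidences among $a,b,a',b'$ or with the swap partners), where one must choose the auxiliary black edges so that a new exact match is actually produced; I expect these cases to be either resolved by the same swaps or trivially cost-$0$, but they are where the real care lies.
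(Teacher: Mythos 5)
Your proposal is correct and follows essentially the same route as the paper: project a scenario onto $J(G,col)$ and drop the trivial moves for the lower bound, lift a parsimonious scenario of $J(G,col)$ for the upper bound, and finish with exactly the paper's auxiliary lemma (a zero-cost scenario exists when $J(G,col)$ is terminal), proved by the same two-swap trick that creates a matched black/gray pair at zero cost. The degenerate configurations you flag are precisely the case split the paper handles explicitly (when the surplus black edge coincides with one of the auxiliary black edges), so nothing essential is missing.
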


\begin{theorem}
\label{mcsnphard}
Deciding for a circle $O$ and a bound $k$ whether there exists a scenario of cost at most $k$ for $O$ is NP-hard.
\end{theorem}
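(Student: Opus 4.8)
The plan is to reduce from \macd, which is NP-hard~\cite{caprara1997sorting}. First I would restate the quantity to be decided in terms of a cycle decomposition. By Theorem~\ref{mlsforgraph} the minimum cost of a scenario for a circle $O$ equals $l(J(O,col))$, and by Theorem~\ref{minlength2break} this equals $e(J(O,col))-c(J(O,col))$. Merging vertices changes neither the number of black nor the number of gray edges, so $e(J(O,col))=e(O)$, and the minimum cost is exactly $e(O)-c(J(O,col))$. Hence a scenario of cost at most $k$ exists for $O$ if and only if $c(J(O,col))\geq e(O)-k$; deciding the cost of a circle is therefore the same problem as computing a \macd of its color-merged graph.

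Second, I would show that every graph arising as an instance of \macd is the color-merged graph of some circle. Let $H$ be an Eulerian $2$-edge-colored multi-graph, and assume $H$ is connected (see below). The key observation is that $J(O,col)$ of a circle $O$ is precisely a connected Eulerian graph equipped with an alternating Eulerian circuit: the circle $O=(u_{0},\dots,u_{2e(O)-1},u_{0})$ is itself such a circuit, and merging vertices preserves it. Conversely, given an alternating Eulerian circuit of $H$, I would create one vertex $u_{i}$ for each of the $2e(H)$ positions of the circuit, join consecutive positions by the corresponding colored edge to form a circle $O$, and color $u_{i}$ by the identity of the vertex of $H$ visited at position $i$. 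Merging the vertices of $O$ of equal color re-identifies all visits to the same vertex of $H$, so $J(O,col)\cong H$ and $e(O)=e(H)$. Combined with the first step, this yields that $O$ admits a scenario of cost at most $k$ if and only if $c(H)\geq e(H)-k$, which is the desired reduction, and it is clearly computable in polynomial time.

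Third, I would supply the existence of an alternating Eulerian circuit in a connected $H$. Since $H$ is Eulerian it decomposes into alternating cycles; if this decomposition has more than one cycle then, because $H$ is connected, two of the cycles share a vertex $v$, and swapping the black/gray pairing at $v$ merges these two alternating closed walks into a single one. Repeating this until one walk remains produces an alternating Eulerian circuit, so the construction above always applies.

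The main obstacle is the connectivity hypothesis on $H$: the construction produces a single circle only when $H$ is connected, so I must argue that \macd remains NP-hard on connected instances. I would handle this by a preprocessing step that joins the components of a general \macd instance into one connected graph while controlling the effect on $c$, for example by identifying a single vertex from each pair of components. The delicate point is an uncrossing argument at such an identified vertex, showing that any alternating cycle decomposition can be rerouted so that no cycle uses edges from two different original components, and doing so without decreasing the number of cycles; this yields $c$ of the connected graph equal to $c$ of the original. This uncrossing step, rather than the circle construction itself, is where the real care is needed, since a careless re-pairing can merge two cycles into one and lower the count.
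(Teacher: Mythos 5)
Your first step --- that the minimum cost of a scenario for $O$ equals $e(O)-c(J(O,col))$ by Theorems~\ref{mlsforgraph} and~\ref{minlength2break}, so that deciding the cost of a circle is exactly \macd on its color-merged graph --- is the same as the paper's. From there you diverge: you reduce from Caprara's \macd by realizing an arbitrary connected Eulerian $2$-edge-colored graph $H$ as $J(O,col)$ via an alternating Eulerian circuit (Kotzig's theorem), which your sketch of the merging-at-a-shared-vertex argument correctly establishes. The paper instead reduces from maximum cycle decomposition of \emph{uncolored} simple Eulerian graphs~\cite{Holyer81}: it walks an ordinary Eulerian circuit of $G$, creates two circle-vertices per visit joined by a gray edge and both colored by the visited vertex, so that in $J(O,col)$ the gray edges become loops and the black edges are exactly $E(G)$; alternating cycle decompositions of $J(O,col)$ then correspond to cycle decompositions of $G$. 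That choice of source problem buys two things: it needs only an ordinary Eulerian circuit rather than an alternating one, and connectivity can be assumed essentially for free, since cycles of an uncolored graph never leave a connected component.

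The genuine gap is the one you flagged yourself and did not close: NP-hardness of \macd on \emph{connected} instances. Caprara's hard instances are breakpoint graphs of permutations, which are not connected in general, so your circle construction does not apply to them directly. Moreover, the repair you sketch --- identify a vertex from each pair of components and then uncross at the identified vertex $w$ --- cannot work as described. In a \emph{maximum} alternating cycle decomposition of the identified graph, two complementary crossing visits at $w$ (one pairing a $G_{1}$-black edge with a $G_{2}$-gray edge, the other a $G_{2}$-black edge with a $G_{1}$-gray edge) must lie in \emph{different} cycles: re-pairing two visits of a single closed alternating walk at the same vertex always splits it into two such walks, so if both visits were in one cycle the decomposition would not be maximum. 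Consequently the re-pairing you propose always merges two cycles and strictly lowers the count, which is precisely the failure mode you were worried about; you would need an entirely different argument for additivity of $c$ under one-point identification, and it is not obvious one exists. The cheapest fix is to avoid the identification altogether: since $c(H)=\sum_{i}c(H_{i})$ over connected components, a polynomial-time algorithm for your circle problem would compute $c(H_{i})$ for each component and hence solve general \macd, so the connected restriction is NP-hard by a Turing reduction; alternatively, switch the source problem to the uncolored cycle decomposition problem of~\cite{Holyer81}, as the paper does.
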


\vspace{-6mm}
\subsection{Cycle decomposition of a color-merged graph}
\label{junctiondecomposition}

Edges of $G$ and $J(G,col)$ can be labeled in such a way that for the edges $(a,b)$ and $(x,y)$ labeled $i$ in $G$ and $J(G,col)$ we have equality $(x,y)=(col(a),col(b))$.
We say that such labelings \emph{conform}.
For a scenario $\rho$ of cost $w$ for $G$ we will construct a scenario $\rho_{J}$ for $J(G,col)$ of length $w$ that acts on the same edges as $\rho$.
We will mimic the process of Section~\ref{decompositionOFscenario},
taking more care this time when relabeling the edges during the scenario and mapping the black and gray edges at the end of the scenario.

We label the black edges of $G$ and $J(G,col)$ with $\{1,\ldots, e(G)\}$ and gray edges with $\{e(G)+1,\ldots, 2e(G)\}$.
Given a labeling $L_{G}$ we obtain a conforming labeling $L_{J}$ by merging the vertices of $G$ having the same color while keeping the labels of the edges. 
We set $P_{0}=\big\{\{1\},\ldots,\{e(G)\}\big\}, G_{0}=G$ and $J_{0}=J(G,col)$.
Let us fix a 2-break scenario $\rho$ of cost $w$ and length $m$ for $G$.
We take the $l$th 2-break $(a,b),(c,d)\rightarrow (a,c),(b,d)$ of $\rho$ transforming $G_{l-1}$ into $G_{l}$ with $L_{G_{l-1}}(a,b)=i$ and $L_{G_{l-1}}(c,d)=j$.

If a 2-break is of cost 1, then we label the newly added edges $(a,c)$ and $(b,d)$ with $i$ and $j$ respectively to obtain $L_{G_{l}}$ 
and merge the subsets in $P_{l-1}$ containing $i$ and $j$ to obtain $P_{l}$.
In $J(G_{l-1},col)$ we replace the edges labeled $i$ and $j$ by an edge $(col(a),col(c))$ labeled $i$ and $(col(b),col(d))$ labeled $j$ 
to obtain a labeling $L_{J_{l}}$ of $J(G_{l},col)$ that conforms to $L_{G_{l}}$.
If a 2-break is of cost 0, then without loss of generality we can suppose that $col(a)=col(d)$. 
In this case we label the newly added edges $(b,d)$ and $(a,c)$ with $i$ and $j$ respectively to obtain $L_{G_{l}}$. 
$J(G_{l-1},col)=J(G_{l},col)$ and $L_{G_{l}}$ is chosen in such a way that $L_{J_{l-1}}$ and $L_{G_{l}}$ still conform, thus we keep $L_{J_{l}}=L_{J_{l-1}}$ and $P_{l}=P_{l-1}$.

At the end of the scenario we obtain a partition $P_{m}$ of $\{1,\ldots, e(G)\}$.
Since $G_{m}$ is terminal, we can map the subsets $\{1,\ldots, e(G)\}$ and $\{e(G)+1,\ldots, 2e(G)\}$ one-to-one in a way that for a pair $(i,j)$ of mapped labels the black edge $i$ and gray edge $j$ have the same endpoints.
For $j\in\{e(G)+1,\ldots, 2e(G)\}$ we include it into a subset of $P_{m}$ containing a label to which $j$ is mapped.
This way a partition $R_{m}$ of a set $\{1,\ldots, 2e(G)\}$ is obtained.    
Only the 2-breaks of cost 1 of $\rho$ modify the colored-merged graph.
This provides us with a 2-break scenario $\rho_{J}$ of length $w$ for $J(G,col)$. 
In addition to that $R_{m}$, 
when seen as a partition of the edges of $J(G,col)$, is exactly the cycle decomposition $C(\rho_{J})$ of the scenario $\rho_{J}$.
In Section~\ref{circlecost} we study the structure of $C(\rho_{J})$ for a parsimonious scenario $\rho$.

\subsection{\mlps for a circle}
\label{circlecost}

For a circle $O$ we take conforming labelings $L_{O}$ and $L_{J}$ of $O$ and $J(O,col)$. 
For a subset $S$ of edges of $J(O,col)$ (resp. $O$) we define a subset $O(S)$ (resp. $J(S)$) of edges of $O$ (resp. $J(O,col)$) labeled with the same labels.
To a scenario $\rho$ for $O$ we have associated a scenario $\rho_{J}$ for $J(O,col)$ and its cycle decomposition $C(\rho_{J})$ in Section~\ref{junctiondecomposition}.

\begin{definition}[Crossing subsets]
Two disjoint subsets of edges $S_{1}$ and $S_{2}$ of a circle $O$ \emph{cross} 
if there are edges $e_{1},e_{2}$ in $S_{1}$ and $f_{1}, f_{2}$ in $S_{2}$ such that a path in $O$ joining $e_{1}$ and $e_{2}$ contains exactly one of the edges $f_{1}$ or $f_{2}$. 
A cycle decomposition $C$ of $J(O,col)$ is said to be non-crossing if none of the subsets of $O(C)$ cross. 
\end{definition}

\begin{theorem}
\label{parsimoniousTOdecomposition}
$C(\rho_{J})$ is non-crossing for a parsimonious scenario $\rho$ for $O$. 
\end{theorem}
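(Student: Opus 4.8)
The plan is to exploit parsimony to pin down the fine structure of $\rho$ on $O$, and then read off $C(\rho_{J})$ as the connected components of a family of non-crossing chords.

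First I would show that every 2-break of a parsimonious $\rho$ splits a single current cycle into two. Since $O$ is a circle, each intermediate graph is $2$-regular; the starting graph has one alternating cycle and the terminal graph has exactly $e(O)$ length-one cycles, while $\rho$ has length $e(O)-1$ by Theorem~\ref{minlength2break}. Hence the number of alternating cycles must increase by one at every step, so no 2-break merges two cycles, and each acts on two black edges of the same current circle. A direct computation shows that splitting a circle at two black edges produces two cycles whose edge sets are the two complementary arcs delimited by those edges. Keeping the labels of Section~\ref{junctiondecomposition} through the scenario, I would prove by induction that the cyclic order of every current circle agrees with that of $O$ on the original labels (the new black edge created by a split inherits a split-point label and sits exactly at the arc boundary). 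Consequently the current cycles, read through $O(\cdot)$, are \emph{contiguous arcs} of $O$, and the successive splits organise these arcs into a laminar, i.e.\ nested-or-disjoint, family.

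Next I would encode the cost function. By the construction of $C(\rho_{J})$ only cost-$1$ 2-breaks merge blocks, and such a move merges the blocks of exactly its two split-point edges. I would represent each cost-$1$ move by the chord joining its two split points on the circle $O$; then the blocks of $C(\rho_{J})$ are precisely the connected components of the transitive closure of these chords, while cost-$0$ moves merely refine the laminar family and contribute no chord.

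Finally I would argue non-crossing. Because the arc split by a move is divided exactly into the two sides of its chord, and because the arcs form a laminar family, any two chords lie in either disjoint or nested arcs, and in both cases do not cross; hence the chords are pairwise non-crossing. A Jordan-curve argument then yields the key topological fact that the connected components of a set of pairwise non-crossing chords form a non-crossing partition of the circle, which is exactly the assertion that $O(C(\rho_{J}))$ is non-crossing. I expect the main obstacle to be the bookkeeping of the first two steps: tracking labels through the newly created edges so that each ``current cycle'' really does map to a contiguous arc and so that every cost-$1$ move becomes a chord at an arc boundary; this is where the interaction between cost-$0$ and cost-$1$ moves must be controlled. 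A secondary technical point is reconciling the edge-based definition of crossing, which involves both black and gray edges, with the black-edge chord picture: one must check that the matched gray edges are carried into the same contiguous arcs by the splits, so that the gray edges introduce no new crossing. The topological lemma of the last step, although standard, likewise needs a careful statement in this cyclic-sequence setting.
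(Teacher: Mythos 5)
Your proposal is correct in outline, but it reaches the conclusion by a genuinely different route than the paper. The paper argues by minimal counterexample: it takes a smallest circle $O^{*}$ admitting a parsimonious scenario with a crossing decomposition, uses Theorem~\ref{minlength2break} to see that the first 2-break must split $O^{*}$ into two vertex-disjoint circles $\bar{O}$ and $\hat{O}$, applies the (implicit) induction hypothesis to get non-crossing decompositions of the two halves, and then checks that taking their union and merging at most the two boundary blocks containing the labels $i$ and $j$ (when the first 2-break has cost $1$) cannot create a crossing. You share the paper's key structural fact --- that every 2-break of a parsimonious scenario splits one current circle into two --- but instead of recursing you run the whole scenario forward, maintaining the invariant that each current circle corresponds to a contiguous arc of $O$ with the cyclic order of labels preserved, so that the arcs form a laminar family; you then encode each cost-$1$ merge as a chord between the two split positions and invoke the standard fact that connected components of pairwise non-crossing chords form a non-crossing partition. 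What the paper's recursion buys is brevity: the laminar structure and the chord picture are absorbed into the induction hypothesis, and only a single recombination step needs to be verified (the step the paper dismisses as ``easy to check'' is essentially the same verification you perform when showing a chord cannot cross an arc of the laminar family). What your version buys is that it makes explicit exactly the structure the paper exploits afterwards --- the nested arcs and their endpoints are precisely the objects appearing in Lemma~\ref{decompositionTOmisa} and in the \misa dynamic program --- at the price of the label bookkeeping through cost-$0$ versus cost-$1$ relabelings and the gray-edge pairing, both of which you correctly identify as the delicate points; neither is a gap, since the paper's own labeling rules in Section~\ref{junctiondecomposition} are set up exactly so that your invariant holds in both the cost-$0$ and cost-$1$ cases, and each gray edge ends the scenario in the same minimal arc as its matched black label.
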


\begin{proof}   
If the theorem is false, then the set of the circles for which there exists a parsimonious scenario contradicting the theorem is non-empty.
Let us take a circle $O^{*}$ in this set having the minimum number of edges and a scenario $\rho^{*}$ for $O^{*}$ such that $O(C(\rho^{*}_{J}))$ is crossing.   
$O^{*}$ has at least 2 black edges as otherwise $C(\rho^{*}_{J})$ contains a single subset. 

Using Theorem~\ref{minlength2break} and the structure of a circle we get that a 2-break of a parsimonious scenario for
a union of vertex-disjoint circles transforms one of its circles into a union of two vertex-disjoint circles.
This means that the first 2-break of $\rho^{*}$ replaces edges $i$ and $j$ and transforms $O^{*}$ into a union of two smaller circles $\bar{O}$ and $\hat{O}$. 
The following 2-breaks of $\rho^{*}$ replace 2 edges with labels belonging to either $\bar{O}$ or $\hat{O}$,
which provides us with the parsimonious scenarios $\bar{\rho}$ for $\bar{O}$ and $\hat{\rho}$ for $\hat{O}$.
By the minimality of $O^{*}$ we get that $C(\bar{\rho}_{\bar{J}})$ and $C(\hat{\rho}_{\hat{J}})$ are non-crossing. 
$C(\rho^{*}_{J})$ can be easily obtained from $C(\bar{\rho}_{\bar{J}})$ and $C(\hat{\rho}_{\hat{J}})$ by taking their union and then merging the subsets of edges including $i$ and $j$ if the first 2-break of $\rho^{*}$ is of cost 1. 
Now it is easy to check that $C(\rho^{*}_{J})$ is non-crossing, a contradiction.
\end{proof}
 
\begin{definition}[Arc] An arc is an alternating path joining two vertices of the same color, called \emph{endpoints}, and having the same number of black and gray edges.
Edges of an arc adjacent to its endpoints are called \emph{ends} and one of them is black and another is gray.
\end{definition}

A circle with $n$ vertices has $n$ arcs if all the vertices are of different colors and $n^2/2$ arcs if all the vertices share the same color.   
We say that two arcs do not \emph{overlap} if they are edge-disjoint or one is included in another but none of their ends coincide. 
A set of pairwise non-overlapping arcs will be called an \emph{independent} set of arcs. 
A \MISA (\misa) of $O$ is an independent subset of arcs of $O$ of maximum cardinality. 
For a set of arcs $S$ a \emph{maximal arc} is an arc that is not included in any other arc, 
while a \emph{minimal arc} is an arc that does not include any other arc. 
In the Appendix Lemma~\ref{decompositionTOmisa} is proven, leading us to Theorem~\ref{nestedscenario}.       

\begin{lemma} 
\label{decompositionTOmisa}
The size of a maximum non-crossing cycle decomposition of $J(O,col)$ is equal to the size of a \misa of $O$.   
\end{lemma}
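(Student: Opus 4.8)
The plan is to set up a size-preserving correspondence between the non-crossing cycle decompositions of $J(O,col)$ and the families of pairwise non-overlapping arcs that cover every edge of $O$ (call such a family an \emph{arc-tiling}), and then to show that a maximum independent set of arcs is automatically such a tiling. Throughout I would lean on one elementary fact: an alternating sub-path of $O$, once its endpoints are merged by colour in $J(O,col)$, is balanced at every vertex if and only if its two endpoints share a colour and its two end edges have opposite colours; indeed every internal vertex of the sub-path already carries one black and one gray edge, so balance is decided entirely at the endpoint colour. This says exactly that the balanced sub-paths are the arcs, and it is the bridge between ``Eulerian subgraph of $J(O,col)$'' and ``arc of $O$''.

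First I would prove that $|C|$ is at most the size of a \misa of $O$ for every non-crossing decomposition $C$. Given $C$, the sets $O(c)$ for $c\in C$ form a non-crossing partition of the edges of $O$; hence for each cycle $c$ the union of $O(c)$ with the edge sets of the cycles nested inside it is a contiguous interval of $O$. Being a union of cycles it is balanced in $J(O,col)$, so by the fact above this interval is an arc $A_{c}$. The intervals are laminar, so the $A_{c}$ are pairwise non-overlapping, and $c$ is recovered as the outermost block inside $A_{c}$, so $c\mapsto A_{c}$ is injective. This yields an independent set of $|C|$ arcs, giving the inequality.

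For the reverse inequality I would start from a maximum independent set $\mathcal{A}$ and build a non-crossing decomposition of the same size by the inverse \emph{annulus} construction: to each $A\in\mathcal{A}$ associate the cycle whose edges are those of $A$ minus those of the maximal arcs of $\mathcal{A}$ strictly inside $A$. Short-circuiting each inner arc at its equicoloured endpoints shows this annulus is a single alternating cycle of $J(O,col)$, and the laminar nesting makes the whole family non-crossing. This works verbatim \emph{provided} $\mathcal{A}$ covers every edge of $O$, in which case the annuli partition $E(O)$ and we obtain a non-crossing decomposition with $|\mathcal{A}|$ cycles, matching the size of a \misa.

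The crux, and where I expect to spend the most care, is therefore showing that a maximum $\mathcal{A}$ leaves no edge uncovered. I would argue by exchange. Group the maximal arcs of $\mathcal{A}$ into maximal runs of consecutive \emph{touching} arcs; two touching maximal arcs meet at a vertex whose single colour forces their endpoint colours to agree, so each run is itself an arc and distinct runs are separated by nonempty gaps. If some edge were uncovered it would lie in such a gap, and the complement of the run bordering that gap is again an arc (its endpoints inherit the run's colour, and balance forces opposite-coloured ends). One then checks that this complementary arc nests cleanly around all the other runs and is disjoint from everything inside the chosen run, so adjoining it to $\mathcal{A}$ strictly enlarges an independent set, contradicting maximality. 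The delicate points are exactly this adjacency bookkeeping: verifying that touching maximal arcs share an endpoint colour, that the complementary arc's end edges never coincide with the ends of the arcs it encloses (so that no ``overlap'' is created), and that it is not already a member of $\mathcal{A}$. Once the covering property is secured, the two inequalities combine to give that the size of a maximum non-crossing cycle decomposition of $J(O,col)$ equals the size of a \misa of $O$.
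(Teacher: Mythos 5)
Your proof is correct in its overall architecture and reaches the same two inequalities as the paper, but both halves are argued by genuinely different means, so a comparison is worth making. For the direction ``decomposition $\Rightarrow$ arcs'' you build all the arcs at once as laminar interval hulls $A_c$, whereas the paper converts one cycle at a time through its notion of a \emph{valid pair} $(D,Y)$, always extracting the cycle $c$ whose minimal covering path $P(c)$ is shortest; the iterative order is precisely what lets the paper avoid the one delicate point in your version, namely that ``the cycles nested inside $c$'' is not canonically defined on a circle (a circular non-crossing family has two candidate hulls per block, and only a consistent choice -- e.g.\ cutting the circle at an arbitrary vertex and taking linear hulls, which remain non-crossing for any cut -- makes your family laminar and your map $c\mapsto A_c$ well defined and injective). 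Your reverse direction (the annulus construction) is essentially identical to the paper's: take a maximal arc, delete the edges lying in other arcs, and observe that what remains maps to a single alternating cycle of $J(O,col)$ that crosses nothing. The real divergence is in the crux you correctly identified, the covering property of a \misa. The paper proves it by a short counting argument: orienting $O$, for each colour $x$ the number of black edges entering colour-$x$ vertices equals the number of gray edges leaving them, both in $O$ and in every arc, so an uncovered black edge entering colour $x$ can be paired with an uncovered gray edge leaving colour $x$ to form a new arc overlapping nothing. Your run-and-complement exchange also works -- touching maximal arcs do share an endpoint colour, the complement of a run is an arc with both end edges lying in gaps (hence coinciding with no end of an arc of $\mathcal{A}$), and it properly contains the uncovered edge so it is not already in $\mathcal{A}$ -- but it needs the explicit base case that $\mathcal{A}\neq\emptyset$, which holds because the whole circle is itself an arc with coinciding endpoints. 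The paper's counting argument buys brevity and no case analysis on runs and gaps; your exchange argument buys a more constructive picture of where the new arc sits. Neither delicate point is a fatal gap, but both should be written out if you keep your route.
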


\begin{theorem}
\label{nestedscenario}
The \mlps cost for a circle $O$ is equal to $e(O)-|I|$ for a \misa $I$ of $O$.
\end{theorem}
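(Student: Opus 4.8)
The plan is to prove the equality by establishing the two inequalities ``the \mlps cost of $O$ is at least $e(O)-|I|$'' and ``at most $e(O)-|I|$'' separately, leaning on the machinery already assembled for circles. Throughout I will exploit the correspondence from Section~\ref{junctiondecomposition} that assigns to every scenario $\rho$ for $O$ of cost $w$ a scenario $\rho_{J}$ for $J(O,col)$ of length exactly $w$, together with its cycle decomposition $C(\rho_{J})$.

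For the lower bound, take any parsimonious scenario $\rho$ for $O$ and let $w$ be its cost. Since $\rho_{J}$ is a 2-break scenario of length $w$ on $J(O,col)$, and $J(O,col)$ has $e(O)$ black edges, the partition-merging argument from the first half of the proof of Theorem~\ref{minlength2break} gives $|C(\rho_{J})|\geq e(O)-w$, hence $w\geq e(O)-|C(\rho_{J})|$. Because $\rho$ is parsimonious, Theorem~\ref{parsimoniousTOdecomposition} guarantees that $C(\rho_{J})$ is non-crossing, so by Lemma~\ref{decompositionTOmisa} its size is at most that of a \misa of $O$, i.e. $|C(\rho_{J})|\leq|I|$. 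Combining these yields $w\geq e(O)-|I|$, and since $\rho$ was arbitrary this settles one direction. This part is routine, being essentially a repackaging of the earlier results.

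For the upper bound I would construct an explicit parsimonious scenario of cost $e(O)-|I|$, proceeding by induction on $e(O)$ and mirroring the split used in the proof of Theorem~\ref{parsimoniousTOdecomposition}. Fix a \misa $I$, equivalently (via Lemma~\ref{decompositionTOmisa}) a maximum non-crossing cycle decomposition $C$ of $J(O,col)$ with $|C|=|I|$. The base case $|I|=1$ is immediate: every parsimonious scenario has length $e(O)-1$, hence cost at most $e(O)-1$, which already meets the lower bound. For $|I|>1$ the goal is to choose the first 2-break so that it cuts $O$ into two smaller circles $\bar O$ and $\hat O$ whose maximum non-crossing decompositions $\bar C$ and $\hat C$ are inherited from $C$, and then recurse. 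The accounting drives the choice: a cost-$0$ first break leaves $J$ unchanged and must split $C$ so that the \misa sizes of $\bar O$ and $\hat O$ sum to $|I|$, whereas a cost-$1$ first break merges two subsets and must instead make these sizes sum to $|I|+1$; in either case the inductive costs telescope, through $e(\bar O)+e(\hat O)=e(O)$, to $e(O)-|I|$.

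The crux, and the step I expect to be the main obstacle, is showing that such a split always exists and is compatible with the nested structure of the \misa. Concretely, I would argue that a minimal (innermost) arc of $I$ can always be peeled off by a single 2-break acting on the black edges incident to its two same-colored endpoints, so that this break realizes either a cost-$0$ detachment respecting $C$ or a cost-$1$ detachment raising the combined \misa size by one, and that $C$ restricts correctly to $\bar O$ and $\hat O$ (here non-crossing-ness is exactly what prevents an arc from straddling the cut). Verifying in addition that the induced $\rho_{J}$ still has each 2-break merging two distinct subsets, so that $|C(\rho_{J})|=|I|$ and the cost attains the lower bound with equality, is the delicate bookkeeping that must be carried out alongside the geometric splitting.
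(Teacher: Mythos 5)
Your overall strategy is the paper's: the lower bound via $|C(\rho_J)|\geq e(O)-w$, Theorem~\ref{parsimoniousTOdecomposition}, and Lemma~\ref{decompositionTOmisa} is exactly the paper's argument and is complete as you state it. For the upper bound you have also found the paper's key move (detach a minimal arc of $I$ by a 2-break on the black edges at its two endpoints and recurse), but you leave it as a plan and locate the ``crux'' in the wrong place. The step you fear is actually immediate, and the bookkeeping you propose is unnecessary. The splitting 2-break replaces the black end of the minimal arc $U$ and the black edge outside $U$ adjacent to its gray end; since the two edges created are $(x,v)$ and $(u,z)$ with $u,v$ the endpoints of $U$, and $col(u)=col(v)$ by the definition of an arc, this break \emph{always} has cost $0$ --- there is no cost-$1$ detachment case, and no need to track how $C$ or $C(\rho_J)$ splits, nor to verify that $|C(\rho_J)|=|I|$. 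Likewise the claim that the residual circle $O_1$ has a \misa of size $|I|-1$ needs only the non-overlapping structure of $I$ itself (arcs of $I\setminus\{U\}$ are either edge-disjoint from $U$ or contain it, so all survive the excision), not the non-crossing decomposition of $J(O,col)$. After $|I|-1$ such free splits one has $|I|$ circles, and any parsimonious completion uses $e(O)-|I|$ further breaks of cost at most $1$ each; the total length $e(O)-1$ certifies parsimony and the cost bound $e(O)-|I|$ follows by counting alone, with no telescoping induction on $e(O)$ required. So: right route, correct lower bound, but the upper bound as written is an unexecuted and over-engineered version of an argument that the paper closes in a few lines.
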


\begin{proof}
We first show that there exists a parsimonious scenario for $O$ of cost at most $e(O)-|I|$. 
A parsimonious scenario for $O$ is of length $e(O)-1$, thus of cost at most $e(O)-1$, due to Theorem~\ref{minlength2break}.
This means that if $|I|=1$, then inequality is trivial. 
Otherwise take a minimal arc $U$ in $I$.
Perform a 2-break replacing the black end of $U$ and the black edge not belonging to $U$ adjacent to the gray end of $U$ and transforming $O$ into a union of two circles.
One of these circles is of length equal to the length of $U$ and we call it $\hat{O}_{1}$.
Another is such that its \misa is of size $|I|-1$ and we call it $O_{1}$.
The cost of this 2-break is 0 as the endpoints of an arc have the same color. 
We iterate this procedure for $O_{1}$ until we end up with $|I|$ circles $O_{|I|-1}, \hat{O}_{1}, \ldots, \hat{O}_{|I|-1}$. 
A parsimonious scenario for this set of circles is of length $e(O)-|I|$ and thus of cost at most $e(O)-|I|$.

On the other hand, we take a \mlps scenario $\rho$ of cost $w$ for a circle $O$.
A cycle decomposition $C(\rho_{J})$ of $J(O,col)$ is of size at least $e(O)-w$ using Lemma~\ref{minlength2break}. 
Using Theorem~\ref{parsimoniousTOdecomposition} we obtain that $C(\rho_{J})$ is non-crossing and
using Lemma~\ref{decompositionTOmisa} we obtain an independent set of arcs of size $|C(\rho_{J})|$ 
and thus $w\geq e(O)-|C(J(\rho))|\geq e(O)-|I|$ where $I$ is a \misa.\end{proof}

\subsection{A polynomial time algorithm for \mlps on a circle}
\label{secPolyMCPS}

Fix a vertex $v$ of a circle $O$ having $n$ vertices. 
We say that an arc $U$ \emph{crosses} $v$ if $v$ is in $U$ without being its endpoint. 
We replace $v$ by two vertices of the same color transforming $O$ into an alternating path $P$.
\misa sizes of $P=(v_{0},\ldots,v_{l})$ and $O$ are equal due to Lemma~\ref{arcTOinterval}, proven in the Appendix.

\begin{lemma}
\label{arcTOinterval}
There exists a \misa $I'$ of $O$ such that there is no arc crossing $v$ in $I'$.
\end{lemma}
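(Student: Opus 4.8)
The plan is to establish the equivalent statement that $O$ admits a \misa none of whose arcs cross $v$. This is exactly what the surrounding reduction needs: cutting $v$ into two equally-colored vertices turns $O$ into the path $P$ whose arcs are precisely the arcs of $O$ that do not cross $v$, so a crossing-free \misa of $O$ is an independent family of $P$, yielding the nontrivial inequality between the \misa sizes of $O$ and $P$. I would first record the convenient fact that, since $O$ is an alternating cycle, an arc is simply a contiguous sub-path of even length whose two endpoints receive the same color (balance of black and gray edges being automatic from alternation). The one structural observation I rely on is this: any two arcs of a single \misa $I$ that cross $v$ both use the two edges incident to $v$, so they are not edge-disjoint; by the non-overlap condition they must then be properly nested with distinct ends. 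Hence the arcs of $I$ crossing $v$ form a chain $U_{1}\subset\cdots\subset U_{t}$, whose endpoint vertices are pairwise distinct.

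The argument then goes by induction on $e(O)$. Fix a \misa $I$; if $t=0$ we are done, so assume $t\ge 1$. Suppose first that $I$ also contains an arc that does not cross $v$. Since $U_{1}$ is the only inclusion-minimal crossing arc and any non-crossing arc is either edge-disjoint from $U_{1}$ or would force a smaller arc, $I$ has an inclusion-minimal arc $M$ that does not cross $v$. I would peel $M$: delete its edges and interior vertices and merge its two equally-colored endpoints into a single vertex $v^{*}$. The two edges glued at $v^{*}$ are the ends of the complementary path, hence of opposite colors, so the result is again a circle $O'$ with $e(O')<e(O)$ that still contains $v$ (as $v^{*}$ if $v$ was an endpoint of $M$, and unchanged otherwise). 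The arcs of $O'$ correspond exactly to the arcs of $O$ edge-disjoint from $M$, and a \misa of $O'$ has size $|I|-1$. By induction $O'$ has a \misa with no crossing arc; lifting it and adding back $M$ (which does not cross $v$) produces an independent family of $O$ of size $|I|$ with no arc crossing $v$, as required.

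The remaining case is that every arc of $I$ crosses $v$, i.e.\ $I$ is exactly the chain $U_{1}\subset\cdots\subset U_{t}$. Here I would replace each $U_{s}$ by its complement $\bar{U}_{s}$, the second path of $O$ joining the endpoints of $U_{s}$; each $\bar{U}_{s}$ is again an arc (same colored endpoints, even length) and does not contain $v$ in its interior, so it does not cross $v$. Complementation reverses inclusion and preserves the distinctness of the endpoint vertices, so the $\bar{U}_{s}$ form a reversed chain $\bar{U}_{t}\subset\cdots\subset\bar{U}_{1}$ that is nested with distinct ends, hence an independent set of the same cardinality $t$. As $I$ contains no other arcs, there is nothing to conflict with, and we obtain a crossing-free \misa of $O$.

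I expect the main obstacle to lie in the bookkeeping of the peeling step: verifying that $O'$ is a genuine circle, that arcs and the non-overlap relation transfer faithfully under the contraction (so that a \misa of $O'$ really has size $|I|-1$ and that ``non-crossing'' is preserved on lifting), and in particular handling the boundary case where $v$ is itself an endpoint of the peeled arc $M$. The complementation case is clean once the chain structure is in hand, so the chain observation and the peeling correspondence are the two points demanding care; everything else is routine.
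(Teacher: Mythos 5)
Your argument is essentially correct, but it takes a genuinely different route from the paper's. The paper works entirely on the fixed circle $O$ with an exchange argument: it removes the unique maximal arc $U$ of $I$ crossing $v$, takes the maximum-length chain $H$ through $v$ in what remains, shows the edges flanking $H$'s endpoints belong to no arc, and inserts the complementary arc joining $H$'s endpoints around the other side of the circle; this keeps $|I|$ fixed while strictly decreasing the number of crossing arcs, and iterating finishes the proof. You instead induct on $e(O)$, contracting an inclusion-minimal non-crossing arc $M$ to reduce to a smaller circle, and in the residual case where $I$ is exactly the nested chain of crossing arcs you complement the whole chain at once. Both proofs hinge on the same key fact — the complementary path of an arc (or chain) through $v$ is again an arc with the same endpoints that avoids $v$ — but your contraction step is extra machinery the paper avoids, while your Case 2 is cleaner than the paper's one-at-a-time exchange. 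The trade-off: the paper never has to verify that arcs, independence, and the misa size transfer under a quotient operation, whereas you must; conversely your induction localizes all the combinatorics to the chain case. Two small points need repair. First, the arcs of $O'$ correspond to the arcs of $O$ \emph{non-overlapping} with $M$, not merely those edge-disjoint from it: an arc of $O'$ whose interior contains the merged vertex $v^{*}$ lifts to an arc of $O$ that properly contains $M$. Your size and independence claims survive, but as written the bijection is wrong and the lifting of a misa of $O'$ through $v^{*}$ is exactly where it bites. Second, in Case 2 the outermost arc $U_{t}$ may be the entire circle anchored at some $w\neq v$ (the paper explicitly admits arcs whose two endpoints coincide), in which case $\bar{U}_{t}$ is empty; you must re-anchor that arc at $v$ instead of complementing it, after checking its ends (the two edges at $v$) coincide with no end of $\bar{U}_{t-1},\ldots,\bar{U}_{1}$, which holds since those ends are not incident to $v$.
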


For $0\leq i\leq j\leq l$ $\misa(i,j)$ denotes the size of a \misa of an alternating path $(v_{i},\ldots, v_{j})$, $\misa(i,i)=0$ for all $i$.
We say that $i$ and $j$ are \emph{compatible} if $v_{i}$ and $v_{j}$ are the endpoints of an arc.
Take $I$ a \misa of a path $(v_{i},\ldots,v_{j})$ for $j\geq i$.  
Vertex $v_{j}$ belongs to at most one arc in $I$ and this creates three possibilities. 
If $v_{j}$ does not belong to any arc in $I$, then $|I|=\misa(i,j-1)$.
If an arc with the endpoints $v_{i}$ and $v_{j}$ is in $I$, then $|I|=\misa(i+1,j-1)+1$.
If an arc with the endpoints $v_{k}$ and $v_{j}$ is in $I$ with $i<k<j$, then $|I|=\misa(i,k)+\misa(k,j)$. This leads to the following recurrence:
\[\misa(i,j)=max 
\begin{cases}
    \misa(i,j-1),& \\
    \misa(i+1,j-1)+1,&\text{if }i\text{ and }j\text{ compatible}   \\
    \misa(i,k)+\misa(k,j),&\text{for }k,i<k<j\text{ compatible with }j\text{,}
\end{cases}\]
which provides us with a dynamic program with time complexity $O(n^3)$.
It is easy to modify the algorithm to give a \misa $I$ of $O$.
From $I$ it we can obtain a \mlps scenario for $O$ using Theorem~\ref{nestedscenario}.
Since $I$ is of size $O(n)$ this can easily be done in $O(n^2)$ time. 

We partition the vertices of $P$ into the subsets $S_{1},\ldots,S_{m}$ of pairwise compatible vertices and set $s_{i}=|S_{i}|$. 
One can show that our dynamic program computes $\misa(0,l)$ in $cn(s_{1}^{2}+\ldots+s_{m}^{2})$ steps for some constant $c$.
If the subsets are of equal sizes then the number of steps is $cn^{3}/m$.
This means that the best-case time-complexity of our program is $O(n^2)$.

\section{DCJ scenarios for genomes}
\label{mlpsbreakpoint}

A \emph{genome} consists of \emph{chromosomes} that are linear or circular orders of genes separated by potential \emph{breakpoint} regions.
%Chromosomes are partitioned into uniquely labeled directed \emph{syntenic blocks} separated by \emph{breakpoint} regions.
In Figure~\ref{0} the tail of an arrow represents the \emph{tail extremity}, 
and the head of an arrow represents the \emph{head extremity} of a gene. 
We can represent a genome by a set of \emph{adjacencies} between the gene extremities. 
In Figure~\ref{0} this set is $\big\{\{1t\}, \{1h, 2t\}, \{2h,3h\}, \{3t\}\big\}$ for genome $A$ and $\big\{\{1t\}, \{1h, 2h\}, \{2t,3h\}, \{3t\}\big\}$ for genome $B$. 
An \emph{adjacency} is either an unordered pair of the extremities that are adjacent on a chromosome, 
called \emph{internal} adjacency, or a single extremity adjacent to one of the two ends of a linear chromosome, called an \emph{external} adjacency.
In what follows we will suppose two genomes $A$ and $B$ that share the same genes, and our goal will be to transform $A$ into $B$ using a sequence of DCJs.

\begin{definition}[Double cut and join]\label{dcjgenome} A DCJ cuts one or two breakpoint regions and joins the resulting ends of the chromosomes back in one of the four following ways: $\{a,b\},\{c,d\}\rightarrow\{a,c\},\{b,d\}$; $\{a,b\},\{c\}\rightarrow\{a,c\}$; $\{a,b\}\rightarrow\{a\},\{b\}$; $\{a\},\{b\}\rightarrow\{a,b\}$.
\end{definition}

We partition the gene extremities into subsets of different colors. $col(a)$ denotes the color of a gene extremity $a$.
The colored internal adjacency $\{a,b\}$ is $\{col(a), col(b)\}$ and the colored external adjacency $\{a\}$ is $\{col(a),\circ\}$, where $\circ$ does not coincide with any of the colors of the gene extremities.
A DCJ $A\rightarrow A'$ is said to be of zero cost if the sets of colored adjacencies of $A$ and $A'$ are equal. It is of cost 1 otherwise. 
For example $\{a,b\},\{c\}\rightarrow\{a,c\},\{b\}$ is of cost 0 if $\big\{\{col(a),col(b)\},\{col(c),\circ\}\big\}=\big\{\{col(a),col(c)\},\{col(b),\circ\}\big\}$, that is if $col(b)=col(c)$.
The cost of a DCJ scenario is the sum of the costs of its rearrangements. 

In~\cite{Bergeron2006}, a linear time algorithm for finding a parsimonious DCJ scenario was proposed.
The algorithm is based on the analysis of the connected components of the \emph{adjacency graph}.  
Here, we use a slightly different structure associated to a genome pair $(A,B)$ called the breakpoint graph~\cite{kececioglu1993exact,bafna1993genome,caprara1999sorting}.
 
\begin{definition}[Breakpoint graph]
$G(A,B)=(V, E^{b}\cup E^{g})$ for genomes $A$ and $B$, sharing $n$ genes, is a 2-edge-colored Eulerian undirected multi-graph. $V$ consists of $2n$ gene extremities and an additional vertex $\circ$. For every internal adjacency $\{a,b\}\in A$ (resp. $\{a,b\}\in B$) there is a black (resp. gray) edge $(a,b)$ in $G(A,B)$ and for every external adjacency $\{a\}\in A$ (resp. $\{a\}\in B$) there is a black (resp. gray) edge $(a,\circ)$ in $G(A,B)$. We add additional black and gray loops $(\circ, \circ)$ to obtain $d^{b}(\circ)=d^{g}(\circ)=2n$. The breakpoint graph $G(A,B)$ for the genomes from Figure~\ref{0} is given in Figure~\ref{1}.
\end{definition}

\begin{lemma}
\label{dcjTObrekapointgraph}
For the DCJ scenarios transforming genome $A$ into $B$ the minimum length is $l(G(A,B))=e(G(A,B))-c(G(A,B))$, the minimum cost is $l(J)=e(J)-c(J)$ with $J=J(G(A,B),col)$ and the minimum cost of a parsimonious DCJ scenario is $\mlps(G(A,B)$.
\end{lemma}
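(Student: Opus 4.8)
The plan is to set up a cost- and length-preserving correspondence between DCJ scenarios transforming $A$ into $B$ and 2-break scenarios on the breakpoint graph $G(A,B)$, and then read off all three equalities from the theorems already proved. First I would check that each of the four DCJ types of Definition~\ref{dcjgenome} is exactly a 2-break on the black edges of $G(A,B)$, using that the internal adjacency $\{a,b\}$ is the black edge $(a,b)$ and the external adjacency $\{a\}$ is the black edge $(a,\circ)$. The join of two internal adjacencies $\{a,b\},\{c,d\}\rightarrow\{a,c\},\{b,d\}$ is directly the 2-break $(a,b),(c,d)\rightarrow(a,c),(b,d)$; the operation $\{a,b\},\{c\}\rightarrow\{a,c\},\{b\}$ is $(a,b),(c,\circ)\rightarrow(a,c),(b,\circ)$; the split $\{a,b\}\rightarrow\{a\},\{b\}$ is $(a,b),(\circ,\circ)\rightarrow(a,\circ),(b,\circ)$, consuming one black loop; and the join $\{a\},\{b\}\rightarrow\{a,b\}$ is its inverse. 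The black loops at $\circ$, added precisely so that $d^{b}(\circ)=2n$, furnish the slack needed for the content-modifying splits and joins. Because 2-breaks preserve all degrees, every gene-extremity vertex keeps black degree $1$ at each step, so the black edges always encode a genome; and since $A=B$ makes the black and gray edge multisets of $G(A,B)$ coincide, a DCJ scenario reaching $B$ is the same thing as a 2-break scenario reaching a terminal graph.

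Next I would verify that the correspondence preserves cost. Setting $col(\circ)=\circ$, the colored internal adjacency $\{a,b\}$ has color-pair $\{col(a),col(b)\}$ and the colored external adjacency $\{a\}$ has color-pair $\{col(a),col(\circ)\}=\{col(a),\circ\}$, so the two endpoints of each black edge carry exactly the colors occurring in the corresponding colored adjacency. Hence the genome-level condition ``the sets of colored adjacencies of $A$ and $A'$ are equal'' coincides literally with the condition defining $\varphi=0$ for the matching 2-break, and the DCJ cost of a scenario equals the $\varphi$-cost of its image 2-break scenario.

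Given the correspondence, the three claims follow immediately. For the minimum length, the shortest DCJ scenario has the length of the shortest 2-break scenario on $G(A,B)$, which equals $l(G(A,B))=e(G(A,B))-c(G(A,B))$ by Theorem~\ref{minlength2break}. For the minimum cost over all scenarios, Theorem~\ref{mlsforgraph} gives that the cheapest 2-break scenario on $G(A,B)$ costs $l(J)$ with $J=J(G(A,B),col)$, and applying Theorem~\ref{minlength2break} to the graph $J$ rewrites this as $e(J)-c(J)$. For the minimum cost of a parsimonious scenario, the parsimonious DCJ scenarios correspond exactly to the minimum-length 2-break scenarios, so the cheapest among them has cost $\mlps_{\varphi}(G(A,B))$ by the definition of \mlps.

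The main obstacle is confined to the first two paragraphs: making the loop convention at $\circ$ watertight. I must argue both that every split/join DCJ is realized by a genuine 2-break (via the loops) and, in the reverse direction, that the optimal 2-break scenarios attaining the bounds of Theorems~\ref{minlength2break} and~\ref{mlsforgraph} can be taken to correspond to DCJ scenarios of the same length and cost, so that no degenerate 2-break acting only on loops at $\circ$ is needed to reach the optimum. Once this translation is pinned down, the remainder is a direct appeal to the established theorems and to the definition of \mlps.
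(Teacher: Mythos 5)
Your proposal is correct and follows essentially the same route as the paper: a length- and cost-preserving bidirectional correspondence between DCJ operations on $A$ and 2-breaks on $G(A,B)$ (with the black loops at $\circ$ absorbing splits and joins, and $\circ$ given its own color), after which the three equalities are read off from Theorem~\ref{minlength2break}, Theorem~\ref{mlsforgraph}, and the definition of \mlps. The ``obstacle'' you flag in the reverse direction is handled in the paper simply by excluding the identity 2-break, since any non-identity 2-break on $G(A,B)$ corresponds to a DCJ of the same cost.
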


\begin{theorem} \mlps is polynomial-time solvable for a breakpoint graph $G(A,B)$.
\end{theorem}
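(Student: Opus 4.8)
The plan is to combine the reduction provided by Lemma~\ref{dcjTObrekapointgraph} with the algorithmic pipeline assembled in Sections~\ref{generalmlps}--\ref{secPolyMCPS}, and then to argue that the single place where this pipeline threatens to blow up exponentially is harmless for breakpoint graphs. By Lemma~\ref{dcjTObrekapointgraph}, the minimum cost of a parsimonious DCJ scenario transforming $A$ into $B$ equals $\mlps(G(A,B))$, so it suffices to compute this number in polynomial time. By Theorem~\ref{sumofcosts}, $\mlps(G(A,B))=\min_{C}\sum_{c\in C}\mlps(c)$, the minimum ranging over all maximum alternating cycle decompositions $C$ of $G(A,B)$; by Theorem~\ref{costOFsimple} each $\mlps(c)$ is the minimum of $\mlps_{\varphi'}$ over the Eulerian circles of the simple cycle $c$, and by Theorem~\ref{nestedscenario} each such circle cost is $e(c)-|I|$ for a \misa $I$, computed by the $O(n^3)$ dynamic program of Section~\ref{secPolyMCPS}.

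First I would isolate the source of hardness. Because $A$ and $B$ share the same $n$ genes, every vertex of $G(A,B)$ other than $\circ$ is incident to exactly one black and one gray edge, so $\circ$ is the only vertex whose black (equivalently gray) degree can exceed $2$. Hence the alternating cycles of a \macd that avoid $\circ$ are forced; each is already a circle, and I would run the circle algorithm on them once. All remaining freedom in the pipeline---the choice of \macd (Theorem~\ref{sumofcosts}) and the choice of an Eulerian circle for each simple cycle (Theorem~\ref{costOFsimple}, whose $2^{d(S)}$ circles are exactly the resolutions of the repeated visits to $\circ$)---is localized at the single vertex $\circ$.

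Next I would collapse these two nested optimizations into one global problem. Since $\mlps(c)=e(c)-\max_{O}|I_{O}|$ with the maximum taken over Eulerian circles $O$ of $c$ and $I_{O}$ a \misa of $O$, minimizing $\sum_{c}\mlps(c)$ is equivalent to selecting, over all resolutions of $\circ$, a maximum set of pairwise independent arcs in $G(A,B)$. I would make this precise by splitting $\circ$ into degree-two vertices all carrying its special color, turning $G(A,B)$ into a disjoint union of circles whose gluing at the split encodes precisely the free \macd choice, and then cutting this structure open at $\circ$ as in Section~\ref{secPolyMCPS}. This produces a family of alternating paths whose endpoints share the color of $\circ$, on which the non-crossing property of parsimonious decompositions (Theorem~\ref{parsimoniousTOdecomposition}), its translation into independent arcs (Lemma~\ref{decompositionTOmisa}), and the \misa recurrence all extend, yielding a dynamic program over the $O(n)$ breakpoints created at $\circ$.

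The main obstacle is exactly this last step. In full generality computing an optimal \macd is NP-hard, and already enumerating the $2^{d(S)}$ Eulerian circles of a simple cycle is exponential, so the entire argument rests on showing that for a breakpoint graph the lone high-degree vertex $\circ$ is the only point of interaction between cycles, and that the induced optimization still possesses the nesting/non-crossing structure exploited by the one-dimensional recurrence of Section~\ref{secPolyMCPS}. Once that structural claim is established, the dynamic program runs on an object of size $O(n)$; the bookkeeping required to handle the $O(n)$ endpoints spawned at $\circ$ costs one factor of $n$ beyond the $O(n^3)$ circle algorithm, for an overall $O(n^4)$ running time, matching the bound announced in the introduction.
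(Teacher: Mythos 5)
Your reduction via Lemma~\ref{dcjTObrekapointgraph} and your localization of all the combinatorial freedom at the single high-degree vertex $\circ$ both match the paper: every vertex other than $\circ$ has black and gray degree $1$, so the circles avoiding $\circ$ are forced components of any \macd, and what remains is a collection of alternating paths with both ends at $\circ$ (with same-colored end edges, so $AA$ or $BB$ paths). Up to that point your argument is sound. The gap is in how you resolve the remaining choice. You propose to ``collapse the two nested optimizations into one global problem'' by cutting the structure open at $\circ$ and running a one-dimensional non-crossing dynamic program over the $O(n)$ endpoints created there, invoking extensions of Theorem~\ref{parsimoniousTOdecomposition} and Lemma~\ref{decompositionTOmisa}. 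But the residual choice of \macd on the leftover graph is not a non-crossing selection problem on a linearly or cyclically ordered object: it is exactly a choice of which $AA$ path to pair with which $BB$ path (each simple cycle through $\circ$ is the union of one $AA$ and one $BB$ path, since $\circ$ can have degree at most $2$ in a simple cycle), and the costs of distinct pairs are independent of one another with no crossing constraint between them. A one-dimensional interval-type recurrence cannot enumerate arbitrary pairings of an unordered set of paths; restricting to pairings that are non-crossing in some fixed arrangement around $\circ$ would silently discard feasible (possibly optimal) decompositions, and you give no argument that an optimal pairing can be taken non-crossing. You flag this extension as ``the main obstacle'' but then assume it rather than prove it, so the heart of the polynomiality claim is missing.

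The paper resolves this step differently and more cleanly: it builds a complete bipartite graph $H$ whose two sides are the $AA$ paths and the $BB$ paths, weights the edge $\{a,b\}$ by the \mlps cost of the simple cycle $a\cup b$ (computable by the circle algorithm applied to its only two Eulerian circles, since such a cycle has $d(S)=1$, so your worry about the $2^{d(S)}$ blow-up evaporates for a concrete reason), and then solves an optimal assignment problem on $H$ with the Hungarian algorithm in $O(n^3)$ time; Lemma~\ref{weightinggraph} bounds the total weighting cost by $O(n^4)$. To repair your proof you would need to replace your global dynamic program with this (or an equivalent) matching step; the rest of your pipeline can stay as is.
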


\begin{proof}
Take genomes $A$ and $B$ sharing $n$ genes.
For all the vertices $v\neq\circ$ we have $d^{g}(G(A,B),v)=d^{b}(G(A,B),v)=1$.
From this we obtain that for an edge belonging to a circle this is the only simple cycle in $G(A,B)$ including this edge. 
Thus a \macd of $G(A,B)$ includes all of its circles.
These set aside, we are left with $G(A,B)'$, which is a union of alternating paths starting and ending at $\circ$ and having the end edges of the same color. 
If this color is black we call a path $AA$, and $BB$ otherwise.
Every simple cycle of $G(A,B)'$ is a union of a $BB$ path and a $AA$ path.
We proceed by constructing a complete bipartite graph $H$ having $AA$ and $BB$ paths as vertices. 
An edge joining paths $a$ and $b$ is assigned the weight equal to the \mlps cost of $a\cup b$.  
\begin{lemma}
\label{weightinggraph} 
The weights of the edges of $H$ can be assigned in $O(n^4)$ time. 
\end{lemma}
Lemma~\ref{weightinggraph} is proven in the Appendix. 
We proceed by computing a maximum weight matching for $H$ to obtain a \macd of $G(A,B)'$ minimizing \mlps cost, this can be done in $O(n^3)$ time using Hungarian algorithm.   
The \mlps cost of $G(A,B)$ is obtained by adding the costs of the circles and the cost of $G(A,B)'$.
In this way we obtain a $O(n^4)$ time algorithm for computing the \mlps cost, which it can be easily modified to give a \mlps scenario. 
\end{proof}
\begin{figure}[!h]
\centering
\includegraphics[scale = 0.6]{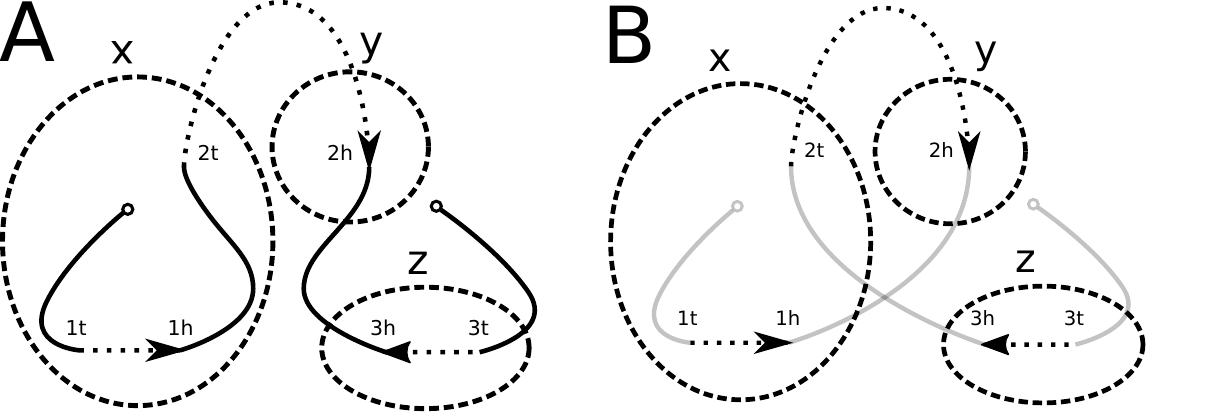} 
\caption{Genomes $A$ and $B$ share genes $1,2$ and $3$ and both consist of a single linear chromosome. Their gene extremities are colored in three colors $x,y$ and $z$ according to spacial proximity.
The DCJ $\{1_{h},2_{t}\},\{2_{h},3_{h}\}\rightarrow\{1_{h},2_{h}\},\{2_{t},3_{h}\}$ transforming $A$ into $B$ is of cost 1 because $\big\{\{x,x\},\{y,z\}\big\}\neq\big\{\{x,y\},\{x,z\}\big\}$. 
The DCJ $\{2_{t},1_{h}\},\{1_{t}\}\rightarrow \{2_{t},1_{t}\},\{1_{h}\}$ on genome $A$ or $B$ would be of cost 0.}
\label{0}
\end{figure}
\begin{figure}[!h]
\centering
\includegraphics[scale = 0.8]{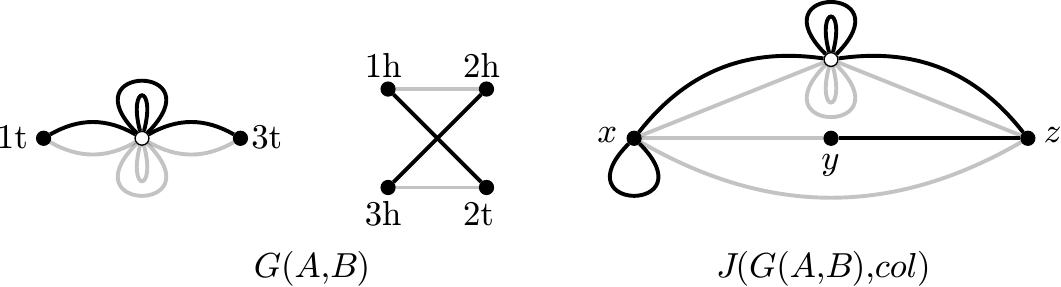} 
\caption{Breakpoint graph $G(A,B)$ and its color-merged graph $J=J(G(A,B),col)$ for the genomes and a coloring of their gene extremities given in Figure~\ref{0}.
$e(J)=6$ and $c(J)=5$.}
\label{1}
\end{figure}

\section{Conclusions and further work}

\vspace{-1mm}
\subsection{Sorting by mathematical transpositions}

Finding a parsimonious 2-break scenario on a circle is closely related to the problem of sorting a circular permutation by mathematical transpositions. 
For a permutation $\pi$ of $\{1,\ldots, n\}$ we define a digraph $D(\pi)$ with vertices $i$ and directed edges $(i,\pi(i))$. 
A transposition on $\pi$ defines a 2-break on $D(\pi)$ as illustrated in Figure~\ref{-2}. 
The problem of finding a minimum cost parsimonious scenario of transpositions for permutations with a cost function $\varphi$ defined for every pair of elements of $V$ was treated in~\cite{Farnoud(Hassanzadeh):2012:SPC:2331863.2332291}. 
Such a $\varphi$ defines a natural cost function $\varphi'$ for the 2-breaks on $V$ such that $\varphi'(\{a,b\},\{c,d\}\rightarrow\{a,c\},\{b,d\})=min(\varphi(a,d),\varphi(b,c))$.
The cost function used in Section~\ref{colorcost} is precisely of such a type. 
The paper described a polynomial-time algorithm for a minimum cost parsimonious scenario of transpositions for a general cost function $\varphi$.
We speculate that this algorithm can be adapted to obtain a polynomial algorithm for $\mlps_{\varphi'}$ on a circle.
Another problem of interest is finding a minimum cost scenario among the scenarios of length smaller than some fixed length. 
This would be an important step towards finding more realistic evolutionary scenarios, 
since the most likely scenario may not always be of minimum length.

\begin{figure}[h]
\centering
\includegraphics[scale = 0.7]{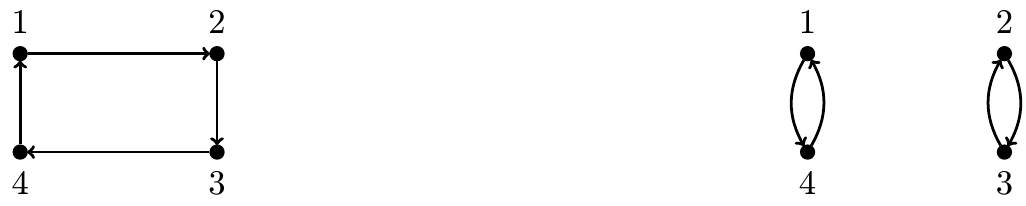} 
\caption{For a permutation $\pi=2341$ digraphs $D(\pi)$ and $D((24)\pi)$ are given. A transformation $D(\pi)\rightarrow D((24)\pi)$ is a 2-break $(1,2),(3,4)\rightarrow(1,4),(3,2)$.}
\label{-2}
\end{figure}

\vspace{-4mm}
\subsection{Conclusions}

The DCJ models on genes that are oriented, or unoriented~\cite{Chen2013}, have
insertions and deletions~\cite{Shao2012}, or have segmental
duplications~\cite{shao2015comparing}, are all intimately tied to the
breakpoint graph. They all can be easily formulated in our setting of 2-edge-colored Eulerian
multi-graphs and 2-break scenarios; all of our results about cycle decompositions and
\mlps apply directly in these cases.
We showed one example of how to use our work algorithmically, but we expect 
our framework to lead to further algorithmic results on general cost functions
and general DCJ distances that consider unequal content.

\bibliography{cpm2018}

\newpage

\appendix

\subsection{Lemma~\ref{degrees}}

\begin{lemmawithoutnumber}
For a vertex $v$ and a subset $Q\in R_{l}$ with $l\in\{0,\dots m\}$ we have
\begin{align*}
\bar{d}(s(G_{0},Q),v)=\bar{d}(s(G_{l},Q),v)
\end{align*}
\end{lemmawithoutnumber}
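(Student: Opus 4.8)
The plan is to split the quantity $\bar{d}(s(G_{l},Q),v)=d^{b}(s(G_{l},Q),v)-d^{g}(s(G_{l},Q),v)$ into its black and gray parts and track each separately as $l$ grows. The gray part is immediate: a $k$-break only replaces black edges, and the gray edges keep their labels throughout the scenario, so the gray edges of $G_{l}$ carrying labels in $Q$ form the same fixed set for every $l$. Hence $d^{g}(s(G_{l},Q),v)$ does not depend on $l$, and it remains to prove $d^{b}(s(G_{0},Q),v)=d^{b}(s(G_{l},Q),v)$, in which only the black part $S=Q\cap\{1,\ldots,e(G)\}\in P_{l}$ of $Q$ plays a role.

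For the black part I would argue by telescoping over the first $l$ breaks, establishing $d^{b}(s(G_{l'},S),v)=d^{b}(s(G_{l'-1},S),v)$ for every $l'\le l$. The crucial structural claim is an \emph{all-or-none} property: the $k$ black edges replaced by the $l'$-th break either all carry labels in $S$, or none do. This holds because those $k$ labels lie in a common block of $P_{l'}$ by construction, and since the partitions only coarsen as the scenario proceeds ($P_{l}$ is obtained from $P_{l'}$ by further merges), that block is contained in a single block of $P_{l}$; if it meets $S$ (itself a block of $P_{l}$) then it is contained in $S$, forcing all $k$ labels into $S$.

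Granting the claim, each break is handled in two cases. If it touches no label of $S$, the black edges of $s(\cdot,S)$ are left intact, so the black degrees are unchanged. If it touches only labels of $S$, then it removes $k$ edges labeled in $S$ and inserts $k$ edges reusing those same labels, hence still in $S$; by the degree-preserving definition of a $k$-break the multiset of endpoints $\{x_{1},\ldots,x_{2k}\}$ is preserved, so the total black degree contributed at each vertex $v$ is unchanged while every other black edge labeled in $S$ is untouched. In both cases $d^{b}(s(G_{l'},S),v)=d^{b}(s(G_{l'-1},S),v)$, and telescoping from $l'=1$ to $l$ yields $d^{b}(s(G_{0},S),v)=d^{b}(s(G_{l},S),v)$. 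Combining this with the gray invariance gives the claimed equality.

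I expect the main obstacle to be a clean justification of the all-or-none property, namely the monotonicity of the partition refinement: one must verify that merging the blocks of the edges touched by each break guarantees that any later block $S\in P_{l}$ is a union of earlier blocks, so that a break confined to a single earlier block lands entirely inside or entirely outside $S$. Once this bookkeeping is set up, the remaining steps — gray invariance and the local degree preservation of a single $k$-break — follow directly from the definitions.
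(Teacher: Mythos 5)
Your proof is correct and rests on the same facts as the paper's: the $k$ edges replaced at each step all lie in a single block of the current partition (hence, by coarsening, entirely inside or outside the fixed block $S$), the replacement reuses the same labels while preserving the multiset of endpoints, and the gray edges are never touched. The paper packages this as an induction on $l$ that decomposes $Q$ into blocks of $R_{l-1}$ and uses additivity of $\bar{d}$, whereas you telescope on a fixed label set using the all-or-none property, but this is only a difference in bookkeeping, not in substance.
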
   

\begin{proof}
Equality is true for $l=0$. Suppose that equality is true for every $Q$ and $v$ with $l-1$ and proceed by induction on $l>0$. Fix a vertex $v$ and a subset $Q\in R_{l}$. The $l$th k-break replaces the edges labeled $e_{1},\ldots,e_{k}$. 
By construction, these edges belong to the same subset $Q'\in R_{l}$. There are two possibilities:
\begin{itemize}
\item ($Q'\neq Q$) In this case $Q\in R_{l-1}$ and $s(G_{l},Q)=s(G_{l-1}, Q)$, as the edges in $Q$ are unaffected by the $l$th k-break. Using the inductive hypothesis we obtain
\begin{equation*}
  \bar{d}(s(G_{l},Q),v)=\bar{d}(s(G_{l-1},Q),v)=\bar{d}(s(G_{0},Q),v)
\end{equation*}
\item ($Q=Q'$) In this case $Q'=\bigcup_{i=1}^{k} Q_{i}$ where $e_{i}\in Q_{i}\in R_{l-1}$. 
There may exist $i,j \in\{1,\ldots,k\}$ such that $Q_{i}=Q_{j}$,
thus we select such $q_{1},\ldots,q_{r}\in\{1,\ldots,k\}$ so that $Q=Q'=\bigcup_{i=1}^{r} Q_{q_{i}}$ and all of these subsets are different. To begin with,
\begin{align*}
  \bar{d}(s(G_{l},Q),v) = \bar{d}(s(G_{l-1},Q),v) = \sum_{i=1}^{r}\bar{d}(s(G_{l-1},Q_{q_{i}}),v).
\end{align*}
The first equality is due to the fact that a graph $s(G_{l},Q)$ is obtained from $s(G_{l-1},Q)$ by a k-break and k-break does not modify the degrees of the vertices in a graph. The second equality is guaranteed since no two $Q_{q_{1}},\ldots, Q_{q_{r}}$ intersect. Then 
\begin{align*}
 \sum_{i=1}^{r}\bar{d}(s(G_{l-1},Q_{q_{i}}),v) = \sum_{i=1}^{r}\bar{d}(s(G_{0},Q_{q_{i}}),v)=\bar{d}(s(G_{0},Q),v).
\end{align*}
With the first equality following from the inductive hypothesis and the latter once again since no two $Q_{q_{1}},\ldots, Q_{q_{r}}$ intersect.
\end{itemize}
As equality is preserved by a k-break, and true for $l=0$, we obtain the result by induction.
\end{proof}

\subsection{Theorem~\ref{costOFsimple}}

\begin{theoremwithoutnumber} 
The $\mlps_{\varphi}$ cost of a simple cycle $S$ is equal to the minimum over all $\mlps_{\varphi'}$ costs of its Eulerian circles. 
\end{theoremwithoutnumber}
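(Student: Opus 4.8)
The plan is to prove both inequalities by induction on $d(S)$, reducing everything to splitting a single vertex. The base case $d(S)=0$ is immediate, since then $S$ is itself a circle, is its only Eulerian circle, and $\varphi'=\varphi$. For the inductive step I would fix a vertex $v$ with $d^{b}(S,v)=d^{g}(S,v)=2$, with black ends $b_{1},b_{2}$ and gray ends $g_{1},g_{2}$. Splitting $v$ into two degree-one copies admits exactly two options (pair $b_{1}$ with $g_{1}$, or $b_{1}$ with $g_{2}$), giving graphs $S_{1}$ and $S_{2}$ together with cost functions $\varphi_{1},\varphi_{2}$ obtained by transferring $\varphi$ along the merge map that re-identifies the two copies of $v$. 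I would then prove the single-split identity $\mlps_{\varphi}(S)=\min\big(\mlps_{\varphi_{1}}(S_{1}),\mlps_{\varphi_{2}}(S_{2})\big)$ and combine it with the induction hypothesis: every Eulerian circle of $S$ splits $v$ in one of the two ways and thereby restricts to an Eulerian circle of $S_{1}$ or of $S_{2}$ (and conversely, each Eulerian circle of $S_{i}$ extends to one of $S$), while the associated circle cost functions agree because the merge maps compose. Hence the minimum of $\mlps_{\varphi'}(O)$ over Eulerian circles $O$ of $S$ equals $\min_{i}\min_{O\text{ of }S_{i}}\mlps_{\varphi_{i}'}(O)=\min_{i}\mlps_{\varphi_{i}}(S_{i})=\mlps_{\varphi}(S)$, as desired.

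A preliminary fact I would record is that splitting a single degree-two vertex of a simple cycle can neither disconnect it nor raise $c$: each $S_{i}$ is again a simple cycle with $d(S_{i})=d(S)-1$. This follows from Theorem~\ref{minlength2break}. If some $S_{i}$ had $c(S_{i})\geq 2$, a minimum-length scenario for $S_{i}$ would have length $e(S)-c(S_{i})\leq e(S)-2$; merging the two copies of $v$ turns every $2$-break of this scenario into a $2$-break on $S$ and its terminal graph into a terminal graph for $S$, producing a scenario for $S$ of length at most $e(S)-2$, contradicting $l(S)=e(S)-c(S)=e(S)-1$. Since $S_{i}$ is then a simple cycle, a scenario for it is parsimonious exactly when it has length $e(S)-1$.

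For the inequality $\mlps_{\varphi}(S)\leq\mlps_{\varphi_{i}}(S_{i})$ I would again use the merge map: a minimum-cost parsimonious scenario for $S_{i}$ (of length $e(S)-1=l(S)$) projects to a parsimonious scenario for $S$ of the same length, and its cost is preserved because $\varphi_{i}$ is by definition $\varphi$ transported along the merge. Taking the minimum over $i$ gives one direction. The main obstacle is the reverse inequality, for which I would lift an optimal parsimonious scenario $\rho$ for $S$ to one of $S_{1}$ or $S_{2}$. I set up the two splits so that they share the black placement ($b_{1}$ and $b_{2}$ on fixed copies) and differ only by swapping the gray edges $g_{1},g_{2}$. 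Away from $v$ every vertex has degree one, so once the black ends are placed, degree preservation forces, step by step, which copy each newly created black edge attaches to; thus the lifted black history is identical for the two splits, and the lifted sequence is a genuine $2$-break scenario of the same length, and of the same cost since costs transfer along the merge.

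The one thing that can fail is terminality at $v$: in the lifted terminal graph the two black edges ending at $v$ must coincide with the two gray edges there \emph{on the same copies}. Here is the key point I would make precise. The final copies of the two black edges are independent of the choice of split (the black lift is identical), whereas $g_{1}$ and $g_{2}$ exchange copies when passing from $S_{1}$ to $S_{2}$. In the terminal graph of $\rho$ the two black edges incident to $v$ coincide, as an unordered pair of edges, with $g_{1},g_{2}$ (gray edges are never touched by $2$-breaks). Since the black copies are fixed and the gray copies flip together, exactly one of the two splits places each final black edge on the same copy as its gray partner, making the lifted endpoint terminal. That split yields a valid parsimonious scenario of cost $\mathrm{cost}(\rho)=\mlps_{\varphi}(S)$ on the simple cycle $S_{i}$, so $\min_{i}\mlps_{\varphi_{i}}(S_{i})\leq\mlps_{\varphi}(S)$, closing the induction. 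I expect this last parity observation — that exactly one split reconciles the forced black history with the static gray edges — to be the crux of the argument.
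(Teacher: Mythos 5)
Your proof is correct and follows essentially the same route as the paper's: split one degree-two vertex at a time into the two possible gray pairings, project scenarios down for one inequality, lift them up for the other, and observe that exactly one of the two splits makes the lifted endpoint terminal. The only cosmetic differences are that you organize the iteration as an induction on $d(S)$ (and explicitly justify that each $S_{i}$ remains a simple cycle, which the paper merely asserts), and that you do not spell out the degenerate cases where $v$ carries a black or gray loop, where the two splits coincide and your argument still applies.
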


\begin{proof}

Take a simple cycle $S$ with $d(S)=k$. 
If $k=0$, then $S$ is its own Eulerian circle. 
If $k>0$, then we choose a vertex $v$ satisfying $d^{b}(S,v)=d^{g}(S,v)=2$ and construct a graph with $v$ replaced by two vertices $v_{1}$ and $v_{2}$ satisfying $d^{b}(S,v_{i})=d^{g}(S,v_{i})=1$.
We start with a copy of $S$ from which we remove $v$ with the adjacent edges and add two new vertices $v_{1}$ and $v_{2}$ to obtain $S'$.
If there is a black loop $(v,v)$ in $S$ then we add a black edge $(v_{1},v_{2})$ to $S'$.  
Otherwise there are two black edges $(v,u_{1})$ and $(v,u_{2})$ in $S$ with possibly $u_{1}=u_{2}$. 
We add black edges $(v_{1},u_{1})$ and $(v_{2},u_{2})$ to $S'$.
If there is a gray loop $(v,v)$ in $S$, then we add a gray edge $(v_{1},v_{2})$ to obtain a graph $S_{0}$.
If there are two gray edges $(v,z_{1})$ and $(v,z_{2})$ in $S$, then we construct two graphs $S_{1}$ and $S_{2}$.
$S_{1}$ is obtained by adding gray edges $(z_{1},v_{1})$ and $(z_{2},v_{2})$ to $S'$ and $S_{2}$ by adding gray edges $(z_{1},v_{2})$ and $(z_{2},v_{1})$ to $S'$.
In Figure~\ref{simplecycles} the graphs $S_{1}$ and $S_{2}$ are given for a simple cycle $S$.

$S_{0}, S_{1}, S_{2}$ have the same set of vertices $V'$. 
We set $\bar{v}_{1}=\bar{v}_{2}=v$ and $\bar{u}=u$ for other vertices in $V'$.     
We define a cost function $\varphi'$ for the 2-breaks on $V'$ as follows:
\begin{align*}
\varphi'((x_{1},x_{2}),(x_{3},x_{4})\rightarrow(x_{1},x_{3}),(x_{2},x_{4}))=\varphi((\bar{x}_{1},\bar{x}_{2}),(\bar{x}_{3},\bar{x}_{4})\rightarrow(\bar{x}_{1},\bar{x}_{3}),(\bar{x}_{2},\bar{x}_{4})).
\end{align*}

From $S_{i}$ we obtain $S$ by merging the vertices $v_{1}$ and $v_{2}$. 
We will show that $\mlps_{\varphi}(S)$ of $S$ is equal to the $\mlps_{\varphi'}(S_{0})$ or the minimum 
of $\mlps_{\varphi'}(S_{1})$ and $\mlps_{\varphi'}(S_{2})$ depending on whether $S$ contains a gray loop $(v,v)$ or not. 
$c(S_{i})=1$ and $e(S_{i})=e(S)$, thus the lengths of their parsimonious scenarios are equal due to Theorem~\ref{minlength2break} .    

For every 2-break transforming $S_{i}$ to $S_{i}'$ there is a unique 2-break on $S$ of the same cost transforming $S$ into a graph that we obtain from $S_{i}'$ by merging $v_{1}$ and $v_{2}$. 
If we take a \mlps scenario for $S_{i}$ we obtain a parsimonious scenario for $S$ of the same cost.
This means that $\mlps_{\varphi}(S)$ is smaller or equal to $\mlps_{\varphi'}(S_{0})$ or the minimum 
of $\mlps_{\varphi'}(S_{1})$ and $\mlps_{\varphi'}(S_{2})$ depending on whether $S$ has a gray loop $(v,v)$ or not. 

On the other hand for a 2-break transforming $S$ into $S'$ there exists a 2-break of the same cost transforming $S_{i}$ into such $S_{i}'$ that by merging its vertices $v_{1}$ and $v_{2}$ we obtain $S'$. 
A \mlps scenario for $S$ provides us with a sequence of 2-breaks $\rho$ on $S_{i}$ of the same cost transforming it into a graph $\hat{S_{i}}$ from which we obtain a terminal graph by merging $v_{1}$ and $v_{2}$.

The structure of $\hat{S_{i}}$ is fairly simple and the two possible cases can be checked by hand. 
If $S$ contains a gray loop $(v,v)$ then $\hat{S_{0}}$ must be already terminal as it is Eulerian and there is a single gray edge adjacent to $v_{0}$ or $v_{1}$. 
In this case a \mlps scenario for $S$ and $\varphi$ provides us with a parsimonious scenario of the same cost for $S_{0}$ and $\varphi'$. 

Sets of black edges of $S_{1}$ and $S_{2}$ are equal.
This means that a \mlps scenario for $S$ provides us with a single sequence $\rho$ of 2-breaks of the same cost transforming $S_{1}$ into $\hat{S_{1}}$ and $S_{2}$ into $\hat{S_{2}}$ such that by merging $v_{1}$ and $v_{2}$ we obtain terminal graphs.
The sets of black edges of $\hat{S_{1}}$ and $\hat{S_{2}}$ stay equal. 
If $\hat{S_{1}}$ is already terminal, then we are done. Otherwise $\hat{S_{1}}$ is a union of a terminal graph and a cycle of length 2 containing gray edges $(z_{1},v_{1})$ and $(z_{2},v_{2})$ and black edges $(z_{2}, v_{1})$ and $(z_{1}, v_{2})$, however this means that $\hat{S_{2}}$ is terminal. Thus in this case $\mlps_{\varphi}(S)$ is equal to the minimum of $\mlps_{\varphi'}(S_{1})$ and $\mlps_{\varphi'}(S_{2})$.    

Either $S_{i}$ are circles and we are done or we can proceed by choosing another vertex $v$.
At the end we obtain Eulerian circles of $S$ and $\mlps_{\varphi}(S)$ is equal to the minimum of the $\mlps$ costs for these circles.  
\end{proof}

\subsection{Theorem~\ref{mlsforgraph}}
We start by proving an auxiliary lemma. 

\begin{lemmawithoutnumber}
If $J(G,col)$ is terminal, then there exists a zero cost 2-break scenario for $G$.  
\end{lemmawithoutnumber}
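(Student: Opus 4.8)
The plan is to exploit two facts. First, a 2-break only rewrites black edges, so the gray edges are fixed throughout any scenario; a scenario must therefore turn the black edges into the gray edges. Second, by the observation preceding this lemma, a cost-$0$ 2-break satisfies $J(G,col)=J(G',col)$, so cost-$0$ moves fix the color-merged graph. Hence a cost-$0$ scenario can exist only if the black and gray edges already agree as multisets of colored adjacencies, which is exactly the hypothesis that $J(G,col)$ is terminal. I would show this necessary condition is sufficient by induction on $e(G)$, with the empty scenario handling the base case in which $G$ is already terminal.

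The engine of the induction is a single cost-$0$ primitive. For two vertices $a,a'$ with $col(a)=col(a')$ and black edges $(a,x)$ and $(a',z)$, the 2-break $(a,x),(a',z)\rightarrow(a',x),(a,z)$ has cost $0$, since $col(a)=col(a')$ makes the multiset of colored adjacencies $\{col(a),col(x)\},\{col(a),col(z)\}$ invariant. Intuitively this lets me slide an endpoint of a black edge from $a$ to any same-colored vertex $a'$, redistributing black edges within a color class.

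For the inductive step I would assume $G$ is not terminal and delete a maximum set of coinciding black/gray pairs (each a length-$1$ cycle) to obtain a nonempty Eulerian subgraph $G^{*}$ with $J(G^{*},col)$ still terminal and no black/gray coincidence. Pick any black edge $(a,b)$ of $G^{*}$ and set $\alpha=col(a)$, $\beta=col(b)$. Since $J(G^{*},col)$ is terminal the colored adjacency $\{\alpha,\beta\}$ also occurs among the gray edges, so $G^{*}$ has a gray edge $(a',b')$ with $col(a')=\alpha$ and $col(b')=\beta$. Applying the primitive once at $a'$ --- swapping with a black edge incident to $a'$, which exists because $d^{b}(G^{*},a')=d^{g}(G^{*},a')\geq 1$ --- relocates the edge to $(a',b)$, and applying it once more at $b'$ relocates it to $(a',b')$; both moves cost $0$ and use only black edges of $G^{*}$, so they create the new coincidence $(a',b')$ without touching the pairs already set aside. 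Removing this coincidence yields an Eulerian graph with $e(G)-1$ black edges whose color-merged graph is still terminal, and the induction finishes the argument.

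The main obstacle I expect is the bookkeeping for degenerate configurations rather than the central idea: the cases $\alpha=\beta$ (both endpoints of one color), $a=a'$ or $b=b'$ (a relocation step is vacuous), and loops or parallel edges that force the two black edges of a 2-break to be genuinely distinct. I would dispatch these directly --- for instance, if $a=a'$ and $b=b'$ then $(a,b)$ already coincides with the gray edge $(a',b')$, contradicting the coincidence-freeness of $G^{*}$ --- and I would always choose the partner black edge for each swap among the edges of $G^{*}$ incident to $a'$ or $b'$, guaranteeing that the relocations neither disturb nor recreate the coincidences that were removed, so that the count of unmatched black edges strictly decreases.
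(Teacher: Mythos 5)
Your proof is correct and follows essentially the same strategy as the paper's: strip off a maximum set of length-$1$ cycles, use terminality of the color-merged graph to find a gray edge $(a',b')$ color-matching a black edge $(a,b)$, and relocate that black edge onto $(a',b')$ via two zero-cost swaps at same-colored vertices (the paper's 2-breaks $(u,v),(p,u')\rightarrow(u,p),(v,u')$ and $(v,u'),(v',r)\rightarrow(v,r),(u',v')$), then iterate. The degenerate situation you flag, where the chosen black edge is itself incident to $a'$ or $b'$, is exactly the case the paper dispatches in its first bullet with a single zero-cost 2-break, so your bookkeeping concerns match the paper's own case analysis.
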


\begin{proof}
We denote the maximum number of length 1 cycles in a cycle decomposition of $G$ by $c_{1}(G)$. 
If $c_{1}(G)=e(G)$, then $G$ is terminal and we are done. 
Otherwise we demonstrate how we can transform $G$ with a sequence of zero cost 2-breaks into $G'$ with $c_{1}(G')>c_{1}(G)$.
When iterated this provides us with a zero cost scenario for $G$. 

Take a cycle decomposition $C$ of $G$ having $c_{1}(G)$ length 1 cycles and remove these $c_{1}(G)$ cycles from $G$ obtaining $\bar{G}$. 
Take a black edge $(u,v)$ of $\bar{G}$.
$J(\bar{G},col)$ is terminal, thus there exists a gray edge $(col(u),col(v))$ in $J(\bar{G},col)$.
This means that there is a gray edge $(u',v')$ in $\bar{G}$ with $col(u)=col(u')$ and $col(v)=col(v')$. 
$G$ is Eulerian thus $\bar{G}$ is also Eulerian and there are gray edges $(p,u')$ and $(r, v')$ in $\bar{G}$. 
\begin{itemize}
\item If $(u,v)$ equals one of these edges, let's say $(p,u')$, then $col(p)=col(v)=col(v')$.
This means that a 2-break $(p,u'),(r,v')\rightarrow (p,r),(u',v')$ transforming $\bar{G}$ into $\bar{G}'$ is of zero cost and creates a length 1 cycle. 
\item If $(u,v)$ does not coincide with $(p,u')$ or $(r, v')$, then a 2-break $(u,v),(p,u')\rightarrow (u,p),(v,u')$ is of zero cost as $col(u)=col(u')$.
So is a 2-break $(v,u'),(v',r)\rightarrow (v,r),(u',v')$ as $col(v)=col(v')$. 
\end{itemize}
These 2-breaks transforming $\bar{G}$ into $\bar{G}'$ are of zero cost and create a length 1 cycle in $\bar{G}'$. 
Once the length 1 cycles deleted from $G$ are reintroduced to $\bar{G}'$ we obtain a graph $G'$ with $c_{1}(G')>c_{1}(G)$ and a sequence of zero cost 2-breaks transforming $G$ into $G'$.
\end{proof}

\begin{theoremwithoutnumber}
The minimum cost of a scenario for a graph $G$ is equal to $l(J(G,col))$. 
\end{theoremwithoutnumber}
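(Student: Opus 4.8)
The plan is to prove the two inequalities separately: the minimum cost of a scenario for $G$ is at least $l(J(G,col))$, and at most $l(J(G,col))$.

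For the lower bound I would reuse the observation already recorded in Section~\ref{colorcost}: any scenario $\rho$ for $G$ projects, under the merge map $G\mapsto J(G,col)$, to a sequence of 2-breaks on $J(G,col)$. A cost-$0$ 2-break of $\rho$ leaves $J(G,col)$ unchanged and can be deleted, whereas a cost-$1$ 2-break becomes a genuine 2-break on $J(G,col)$. Since the merge of a terminal graph is terminal, the surviving sequence $\rho_J$ is a scenario for $J(G,col)$ whose length equals the cost of $\rho$. Hence the cost of $\rho$ is at least $l(J(G,col))$, and minimizing over all $\rho$ gives the lower bound.

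For the upper bound I would go the other way, lifting a \emph{parsimonious} scenario $\sigma=\sigma_{1},\dots,\sigma_{L}$ for $J(G,col)$ of length $L=l(J(G,col))$ to a scenario for $G$. Writing $J_{i}$ for $J(G,col)$ after $\sigma_{1},\dots,\sigma_{i}$, I build graphs $G_{0}=G,G_{1},\dots,G_{L}$ maintaining the invariant $J(G_{i},col)=J_{i}$. Given the next step $\sigma_{i}$, say $(x,y),(z,t)\rightarrow(x,z),(y,t)$ on the colors $x,y,z,t$, I pick black edges $(a,b),(c,d)$ of $G_{i-1}$ with $col(a)=x$, $col(b)=y$, $col(c)=z$, $col(d)=t$ and perform $(a,b),(c,d)\rightarrow(a,c),(b,d)$ to obtain $G_{i}$; then $J(G_{i},col)=J_{i}$, preserving the invariant. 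Because $\sigma_{i}$ belongs to a parsimonious scenario it strictly increases the cycle count and so genuinely changes the black edges of $J$, which means $\big\{\{x,y\},\{z,t\}\big\}\neq\big\{\{x,z\},\{y,t\}\big\}$, so the lifted 2-break has cost exactly $1$. After $L$ such steps $J(G_{L},col)=J_{L}$ is terminal, so by the auxiliary lemma proven above there is a zero-cost scenario completing $G_{L}$ to a terminal graph. Concatenating yields a scenario for $G$ of total cost $L+0=L=l(J(G,col))$, which combined with the lower bound gives the claimed equality (and invokes Theorem~\ref{minlength2break} to identify $L$ with the length of a parsimonious scenario for $J$).

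The main obstacle is exactly this lifting step: verifying that each $\sigma_{i}$ is realizable on $G_{i-1}$ by a cost-$1$ 2-break whose merge is precisely $\sigma_{i}$. Realizability follows from the edge bijection between $G_{i-1}$ and its merge (the conforming labelings of Section~\ref{junctiondecomposition}), so black edges of the required colors are always present; the cost computation reduces to the fact that a lifted 2-break has cost $0$ if and only if its merge acts as the identity on black edges, which fails for the genuine 2-breaks of a parsimonious scenario. With these two points settled, the auxiliary lemma disposes cleanly of the terminal remainder and the two inequalities close the argument.
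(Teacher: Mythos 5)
Your proposal is correct and follows essentially the same route as the paper: project scenarios onto $J(G,col)$ for the lower bound, lift a parsimonious scenario of $J(G,col)$ back to $G$ and finish with the auxiliary lemma (zero-cost scenario when the merged graph is terminal) for the upper bound. The only difference is that you argue each lifted 2-break has cost exactly $1$, whereas the paper only needs (and only claims) cost at most $1$ per step; your extra observation is true but not required.
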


\begin{proof}
For a 2-break $G\rightarrow G'$ a transformation $J(G,col)\rightarrow J(G',col)$ is also a 2-break.
If the cost of $G\rightarrow G'$ is $0$, then $J(G,col)=J(G',col)$. 
This means that for a scenario of cost $w$ for $G$ there exists a scenario of length $w$ for $J(G,col)$ and thus $l(J(G,col))\leq w$. 
On the other hand, for every 2-break $J(G,col)\rightarrow J'$ a 2-break $G\rightarrow G'$ can be found such that $J(G',col)=J'$. 
For $J(G,col)$ scenario of length $l=l(J(G,col))$ we obtain a 2-break sequence of length $l$, thus of cost at most $l$, 
transforming $G$ into such $G'$ that $J(G',col)$ is terminal. 
Using the previously proven lemma we get a scenario for $G$ of cost at most $l$ establishing $l(J(G,col))\geq w$.  
\end{proof}

\subsection{Theorem~\ref{mcsnphard}}
\begin{theoremwithoutnumber}
Deciding for a circle $O$ and a bound $k$ whether there exists a scenario of cost at most $k$ for $O$ is NP-hard.
\end{theoremwithoutnumber}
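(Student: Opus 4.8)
The plan is to reduce from \MACD, which is NP-Hard~\cite{caprara1997sorting}, exploiting the two structural results already established. By Theorem~\ref{mlsforgraph} combined with Theorem~\ref{minlength2break}, the minimum cost of a scenario for a circle $O$ equals $l(J(O,col)) = e(J(O,col)) - c(J(O,col))$. Since merging vertices changes neither the number of black edges nor the number of gray edges, $e(J(O,col)) = e(O)$, so deciding whether $O$ admits a scenario of cost at most $k$ is exactly deciding whether $c(J(O,col)) \geq e(O) - k$. In other words, the cost question on $O$ is literally an \macd question on the color-merged graph $J(O,col)$, and the whole burden of the reduction is to realize an arbitrary \macd instance as the color-merged graph of some circle.

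First I would take a connected Eulerian 2-edge-colored multi-graph $H$, the input to an \macd instance; Caprara's hardness already holds for connected graphs with $d^{b}(v)=d^{g}(v)=2$ at every vertex. The key step is to build a circle $O$ together with a coloring $col$ so that $J(O,col)=H$ as 2-edge-colored multi-graphs. For this I would invoke the classical fact that a connected 2-edge-colored graph with $d^{b}(v)=d^{g}(v)$ at every vertex admits an \emph{alternating} Eulerian circuit (Kotzig). Traversing such a circuit visits the vertices of $H$ in some cyclic order $w_{0}, w_{1}, \ldots, w_{2e(H)-1}$ with incident edges alternating in color. I would then define $O$ to be the circle on fresh vertices $u_{0}, \ldots, u_{2e(H)-1}$ following this circuit, so that each $u_{i}$ is incident to exactly one black and one gray edge (hence $O$ is a single alternating cycle, i.e. a genuine circle), and set $col(u_{i})=w_{i}$.

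With this construction, merging the vertices of $O$ by color restores each vertex $w$ of $H$ together with all of its incident edges, so $J(O,col)=H$ and therefore $c(J(O,col))=c(H)$. Setting $k=e(H)-t$, a scenario of cost at most $k$ for $O$ exists if and only if $c(H)\geq t$, which is precisely the \macd decision question for the instance $(H,t)$. As the construction is plainly polynomial, this yields the desired NP-hardness.

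I expect the main obstacle to be the realization step, namely showing that the given \macd instance can be obtained as the color-merge of a single circle, which hinges on the existence of an alternating Eulerian circuit. I would take care to record why the relevant \macd instances may be assumed connected (so that a single Eulerian circuit exists at all) and why the color-alternation along the circuit is forced by the balanced-degree condition $d^{b}(v)=d^{g}(v)$. The remainder is routine bookkeeping: verifying that each vertex $w$ of $H$ is visited exactly $d^{b}(w)$ times, so that merging the $u_{i}$ of equal color reproduces $H$ edge-for-edge and preserves $e$ and $c$.
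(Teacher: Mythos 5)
Your proposal is correct and pivots on the same fact as the paper's proof --- by Theorems~\ref{mlsforgraph} and~\ref{minlength2break} the minimum cost of a scenario for a circle $O$ is $e(O)-c(J(O,col))$, so the whole task is to realize a hard cycle-decomposition instance as the color-merge of a circle --- but your realization step takes a genuinely different route. The paper sources its hardness from the \emph{uncolored} maximum cycle decomposition problem on connected Eulerian graphs via Holyer~\cite{Holyer81}: it unrolls an ordinary Eulerian circuit $(u_1,\ldots,u_n,u_1)$ of $G$, turns each edge of $G$ into a black edge of $O$ and each vertex visit into a gray edge that collapses to a loop under merging, so that $J(O,col)$ is $G$ plus gray loops, and alternating cycle decompositions of $J(O,col)$ correspond to cycle decompositions of $G$. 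You instead reduce from \macd itself~\cite{caprara1997sorting} and realize the colored instance $H$ \emph{exactly} as $J(O,col)$ by unrolling an \emph{alternating} Eulerian circuit, whose existence for connected graphs with $d^{b}(v)=d^{g}(v)$ is Kotzig's theorem. Your version is conceptually cleaner --- it exhibits \MLS on circles as being precisely \macd on connected graphs --- but it carries two dependencies the paper avoids: Kotzig's theorem (the gray-loop trick needs only an ordinary Eulerian circuit), and the claim that \macd remains NP-hard on \emph{connected} instances, which you flag but do not discharge. That claim is true (for instance, the paper's own loop construction applied to Holyer's connected instances already produces connected hard \macd instances), but as written your argument leans on an unreferenced connectivity property of Caprara's construction; note also that your construction needs only that $H$ be connected and Eulerian, not that $d^{b}(v)=d^{g}(v)=2$ everywhere, so you should drop or justify that stronger assumption.
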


\begin{proof}
The problem is clearly in NP.
We reduce the decision version of a maximum cycle decomposition on simple Eulerian graphs, which is
NP-hard~\cite{Holyer81}, to our problem.
Without loss of generality, take an instance $G=(V,E)$ and a bound $k$,
where $G$ is Eulerian and connected.
Consider an Eulerian cycle $(u_{1},u_{2},\dots,u_{n},u_{1})$ of $G$ and construct
a circle $O=(v_{1},\ldots,v_{2n},v_{1})$ with gray edges $(v_{2i-1},v_{2i})$, black edges $(v_{2i},v_{2i+1})$ and 
$col(v_{2i-1})=col(v_{2i})=u_{i}$ for $i\in\{1,\ldots, n\}$.
The gray edges of $J(O,col)$ are loops and black edges of $J(O,col)$ are exactly the edges of $G$.
There is a scenario of cost at most $e(G)-k$ for $O$ if and only if there exists an alternating cycle packing of $J(O,col)$ of size at least $k$ due to Theorem~\ref{mlsforgraph} and Theorem~\ref{minlength2break}.
And the latter is true if and only if $G$ admits a maximum cycle decomposition of size least $k$.
\end{proof}

\subsection{Lemma~\ref{decompositionTOmisa}}

We start by proving an auxiliary lemma. 

\begin{lemmawithoutnumber}
\label{every}
Every edge of a circle $O$ is included in at least one arc of its \misa $I$.    
\end{lemmawithoutnumber}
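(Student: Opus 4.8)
The plan is to argue by contradiction using the maximality of $I$: if some edge is not covered by any arc of $I$, I will exhibit an arc $\gamma$ that is edge-disjoint-or-nested with every arc of $I$, so that $I\cup\{\gamma\}$ is again independent but strictly larger, contradicting that $I$ is a \misa.

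First I would record the structure of $I$. The \emph{maximal} arcs of $I$ (those not included in another arc of $I$) are pairwise edge-disjoint, since two nested arcs of $I$ would contradict maximality of the inner one; moreover every arc of $I$ is contained in a maximal one, so the edges covered by $I$ are exactly those covered by its maximal arcs. Reading the circle cyclically, the covered edges therefore fall into maximal runs of consecutive covered edges, which I call \emph{blocks}, separated by nonempty runs of uncovered edges, the \emph{gaps}. Within a block the maximal arcs tile it as consecutive sub-paths meeting end to end, and since each such arc joins two vertices of equal color, chaining these equalities shows that the two extreme endpoints of a block also have equal color; hence each block is itself an arc. The key observation, used repeatedly below, is that no gap edge can be an \emph{end} of any arc of $I$, because the ends of an arc are among its own (covered) edges.

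Assuming a gap exists, I would pick any block $B$ and take $\gamma$ to be its \emph{complement}: the alternating path running the other way around the circle between the two endpoints of $B$. Since $B$ is an arc, $\gamma$ is an arc as well, and its two ends are precisely the two uncovered gap edges flanking $B$; in particular $\gamma$ covers a gap edge, so $\gamma\notin I$. It remains to check that $\gamma$ does not overlap any arc $C\in I$. If $C$ lies inside $B$ then $C$ is edge-disjoint from $\gamma$; otherwise $C$ lies inside another block, hence $C\subseteq\gamma$, and this nesting has distinct ends because the ends of $\gamma$ are uncovered while those of $C$ are covered. Thus $I\cup\{\gamma\}$ is independent and larger than $I$, the desired contradiction. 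The degenerate case $I=\emptyset$ is separate: the whole circle is itself an arc, so $I$ cannot be maximal.

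I expect the main obstacle to be the verification that $\gamma$ is genuinely non-overlapping, i.e. establishing the distinct-ends condition of the overlap definition. The naive idea of taking the complement of a single maximal arc adjacent to the gap fails precisely when that arc shares an endpoint with a neighbouring arc of $I$, since then one of the two complementary ends need not be uncovered; grouping the maximal arcs into blocks is exactly what removes this difficulty, because consecutive blocks are always separated by a gap and so a block's complement always terminates on gap edges.
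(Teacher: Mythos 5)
Your proof is correct, but it takes a genuinely different route from the paper's. The paper orients the circle and makes a counting argument local to a single color class: for every color $x$, the number of black edges entering color-$x$ vertices equals the number of gray edges leaving them, both in $O$ and inside every arc, so an uncovered black edge entering color $x$ forces an uncovered gray edge leaving color $x$; these two edges are then taken as the ends of a new arc that augments $I$. Your argument is instead structural and global: you group the maximal arcs of $I$ into blocks of consecutive covered edges, observe that a block inherits the arc property by chaining endpoint colors across the tiling, and then adjoin the \emph{complement} of a block. The two constructions produce different augmenting arcs but reach the same contradiction. What your version buys is a cleaner verification of independence: because every arc of $I$ sits inside a single block, the only cases are ``inside $B$, hence disjoint from $\gamma$'' and ``inside another block, hence nested in $\gamma$ with uncovered ends,'' so partial overlap never has to be excluded. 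The paper leaves the analogous non-overlap claim (``by construction, this new arc does not overlap with any of the arcs in $I$'') essentially unargued, and justifying it requires a small case analysis on how two sub-paths of a circle can intersect, resolved by the same observation you use, namely that the new arc's ends are uncovered while the ends of every arc of $I$ are covered. The price of your approach is the extra bookkeeping about maximal arcs, blocks and gaps, and the separate degenerate case $I=\emptyset$; the paper's counting argument is shorter once one accepts its final claim.
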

\begin{proof}
Let us direct the edges of $O$ to obtain a directed cycle and suppose by contradiction that an edge $e$ entering a vertex of color $x$ does not belong to any arc in $I$. 
Without loss of generality we can suppose that the color of $e$ is black. 
The number of gray edges leaving the vertices of color $x$ is equal to the number of black edges entering the vertices of color $x$ in $O$ and in any arc of $O$.
This is true due to the constraint that an arc must have the vertices of the same color as its endpoints and the edges of different colors as its ends. 
This means that there is a gray edge $f$ in $O$ not included in any arcs of $I$ leaving a vertex of color $x$.
We define a new arc starting and ending at the vertices of color $x$ with edges $e$ and $f$ as its ends. 
By construction, this new arc does not overlap with any of the arcs in $I$, which contradicts the maximality of $I$.  
\end{proof}

Let us take a set $D$ of edge-disjoint non-crossing cycles of $J(O,col)$ and an independent set of arcs $Y$ of $O$. 
We call $(D,Y)$ a valid pair if the set of edges included in $Y$ and $O(D)$ partition the set of edges of $O$. 

\begin{lemmawithoutnumber} 
The size of a maximum non-crossing cycle decomposition of $J(O,col)$ is equal to the size of a \misa of $O$.   
\end{lemmawithoutnumber}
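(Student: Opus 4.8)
The plan is to set up an explicit correspondence between arcs of $O$ and alternating cycles of $J(O,col)$, and then to interpolate between a cycle decomposition and an independent set of arcs through the valid pairs just introduced. The basic brick is the following: an arc $U$ is a balanced alternating path whose two endpoints share a color, so merging vertices by color closes $U$ into an alternating cycle $J(U)$ of $J(O,col)$ whose pullback $O(J(U))$ is exactly the edge set of $U$; conversely, the edges of any alternating cycle of $J(O,col)$ pull back to a balanced alternating subgraph of $O$ from which arcs can be read off. I would record at the outset that a cycle decomposition of $J(O,col)$ is precisely a valid pair with $Y=\emptyset$ (the pullbacks partition the edges of $O$), and that by the auxiliary lemma a \misa covers every edge of $O$, so coverage is never an issue on the arc side.

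\textbf{Lower bound.} To show the maximum non-crossing decomposition is at least the size of a \misa, I would start from a \misa $I$. The non-overlap condition makes $I$ laminar: any two arcs are edge-disjoint or properly nested with distinct ends. I would argue that a maximum independent arc family may be taken edge-disjoint, since whenever two arcs are nested the portion of the larger lying outside the smaller can be re-expressed by independent arcs without decreasing the count, so a maximum family flattens to pairwise edge-disjoint arcs that, by the auxiliary lemma, cover all edges. Merging each such arc by color then yields cycles $J(U)$ whose pullbacks are edge-disjoint and, being carved from pairwise non-crossing regions of the circle, constitute a non-crossing cycle decomposition of $J(O,col)$ of size $|I|$.

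\textbf{Upper bound.} For the reverse inequality I would begin from a maximum non-crossing decomposition $D$ and walk it to an independent set of arcs of the same size using the valid-pair machinery, starting from the valid pair $(D,\emptyset)$. I would repeatedly select a cycle $d\in D$ whose pullback $O(d)$ is innermost in the non-crossing (laminar) order and replace it by an arc supported on the same edges, moving it from the $D$-side to the $Y$-side; the non-crossing hypothesis guarantees that $O(d)$ is a single balanced region hosting such an arc without overlapping the arcs already produced, so that validity and the quantity $|D|+|Y|$ are preserved at each step. Ending with $D=\emptyset$ leaves an independent set of arcs $Y$ with $|Y|=|D|$, so a \misa has size at least $|D|$. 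Combining the two inequalities, and using that $|I|$ is the maximum over all independent arc sets, yields the claimed equality.

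The delicate point throughout is the faithfulness of the arc--cycle dictionary: I must check that merging an arc produces a single alternating cycle rather than several, that the non-crossing condition on $O(D)$ is exactly the laminarity needed to host pairwise non-overlapping arcs, and---most importantly---that each conversion preserves the \emph{partition} of $O$'s edges recorded in the definition of a valid pair, so that the cardinalities are matched exactly rather than merely by inequality. The auxiliary lemma is precisely what rules out leftover uncovered edges on the arc side, while the fact that $Y$ in a valid pair is forced to be edge-disjoint is what keeps the arc-to-cycle passage genuinely cardinality preserving. Verifying these bookkeeping invariants, rather than establishing any single sharp estimate, is where I expect the real work to lie.
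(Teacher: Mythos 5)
Your overall architecture (the arc--cycle dictionary plus a two-way walk through valid pairs) is the same as the paper's, but both of your key technical steps are wrong as stated, and the first one is unfixable in the form you propose. The claim that ``a maximum independent arc family may be taken edge-disjoint'' is false. Take a circle on twelve vertices $v_{1},\ldots,v_{12}$ whose only color coincidences are $col(v_{1})=col(v_{11})$, $col(v_{3})=col(v_{9})$, $col(v_{5})=col(v_{7})$. The three nested arcs $v_{5}\!\to\! v_{7}$, $v_{3}\!\to\! v_{9}$, $v_{1}\!\to\! v_{11}$ form an independent set of size $3$, but every family of pairwise edge-disjoint arcs of this circle has size at most $2$: the two-edge gaps $v_{3}\!\to\! v_{5}$ and $v_{7}\!\to\! v_{9}$ left between consecutive nested arcs are not arcs, and the complementary arcs all pairwise intersect something. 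So ``flattening'' strictly decreases the count, and your lower-bound direction collapses. The paper's construction keeps the nesting: for a \emph{maximal} arc $U$ of $Y$ it takes the set $c$ of edges of $U$ lying in no other arc of $Y$ (a union of several disjoint paths of $O$ in general) and observes that $J(c)$ is nevertheless a single alternating cycle of $J(O,col)$, because the two endpoints of each nested arc removed from $U$ have the same color and are identified in the merged graph, closing up each hole. That identification is precisely the idea your proof is missing.

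The same issue undermines your upper bound. You assert that an innermost cycle $d$ of the non-crossing decomposition has pullback $O(d)$ equal to ``a single balanced region'' and can be replaced by ``an arc supported on the same edges.'' In the example above, the decomposition into the four cycles pulled back from $v_{5}\!\to\! v_{7}$, $(v_{3}\!\to\! v_{9})\setminus(v_{5}\!\to\! v_{7})$, $(v_{1}\!\to\! v_{11})\setminus(v_{3}\!\to\! v_{9})$, and $v_{11}\!\to\! v_{1}$ is non-crossing, yet the middle two cycles have pullbacks consisting of two disjoint paths of $O$ each, so no arc has exactly those edge sets. The paper instead picks the cycle $c$ minimizing the length of the minimal path $P(c)$ of $O$ containing $O(c)$, and proves that $P(c)$ consists of $O(c)$ together with arcs already moved to $Y$; it is $P(c)$, not $O(c)$, that becomes the new arc, and the minimality of $|P(c)|$ (not an ``innermost'' choice of $O(d)$ alone) is what guarantees no foreign cycle's edges intrude. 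Your bookkeeping invariant (validity and conservation of $|D|+|Y|$) is the right one to maintain, but without these two repairs neither direction goes through.
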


\begin{proof}
For a \misa $I$ of $O$ we have that $(\emptyset,I)$ is a valid pair due to the previously proven lemma. 
We start by showing how from a given valid pair $(D,Y)$ with a non-empty $Y$ we can obtain a valid pair $(D',Y')$ with $|D'|=|D|+1$ and $|Y'|=|Y|-1$. 
Take a maximal arc $U$ in $Y$. 
The set $c$ of edges of $U$ not belonging to any other arc in $Y$ is non-empty as it includes its ends. 
$J(V)$ is a cycle of $J(O,col)$ for every arc $V$ of $O$, 
thus we obtain that $J(c)$ is also a cycle of $J(O,col)$ and, by construction, it does not cross with the cycles in $D$. 
We include this cycle in $D$ and eliminate $U$ from $Y$ to obtain $(D',Y')$. 
Iterating this step we obtain a non-crossing cycle decomposition of $J(O,col)$ of size $|I|$.

Take a maximum non-crossing cycle decomposition $C$ of $J(O,col)$. $(C, \emptyset)$ is a valid pair.
We proceed by showing how from a given valid pair $(D,Y)$ with non-empty $D$ we can obtain a valid pair $(D',Y')$ with $|D'|=|D|-1$ and $|Y'|=|Y|+1$.   
For every cycle $c$ in $D$ we define the minimum length path $P(c)$ in $O$ that contains every edge of $O(c)$. 
We choose such $c$ for which $P(c)$ is of minimum length.
Since $P(c)$ is the minimum length path containing $O(c)$, we have that the ends of $P(c)$ belong to $O(c)$ and thus do not belong to any arc in $Y$. 
We remove from $P(c)$ any arcs of $Y$ that it might contain to obtain $P'$, a set of edges of $O$, 
and show that $P'$ consists entirely of edges of $O(c)$. 
Suppose by contradiction that there is an edge $e\in P'$ such that $e\notin O(c)$.
As it does not belong to any arc in $Y$ and $(D,Y)$ is valid, there must be a cycle $c'\in C$ that contains $e$.
However $O(c)$ and $O(c')$ can not cross and the endpoints of $P(c)$ belong to $O(c)$ and stay in $P'$, thus $O(c')$ must be properly included
in $P'$ and thus in $P(c)$. 
However this means that $P(c')$ is also properly included in $P(c)$ and thus shorter in length which contradicts the minimality of $P(c)$. 
Thus we obtain that $P'$ consists entirely of edges of $O(c)$. 
Now we show that $P(c)$ is an arc non-overlapping with any arc in $Y$. 
We have already shown that $P(c)$ is a union of $O(c)$ where $c$ is a cycle in $J(O,col)$ and maybe some arcs from $Y$.
This establishes that there is an equal number of gray and black edges in $P(c)$ and that the colors of its endpoints are the same, 
which means that $P(c)$ is an arc. 
By construction, the ends of this arc do not belong to any arc in $Y$ and thus $\{P(c)\}\cup Y$ is an independent set of arcs. 
We remove $c$ from $C$ and add $P(c)$ to $Y$ to obtain a valid pair $(D',Y')$. 
Iterating this step we obtain an independent set of arcs of $O$ of cardinality equal to the size of a maximum non-crossing cycle decomposition.\end{proof}

\subsection{Lemma~\ref{arcTOinterval}}

\begin{definition}[Chain]
A chain $H$ in the set of arcs $I$ is a set of edge-disjoint arcs $U_{1},\ldots, U_{m}$ in $I$ with endpoints of $U_{i}$ being $u_{i}$ and $v_{i}$ such that all these endpoints are of the
same color and $v_{i}=u_{i+1}$ for $i\in\{1,\ldots,m-1\}$. We say that $u_{0}$ and $v_{m}$ are the endpoints of $H$.  
\end{definition}

\begin{lemmawithoutnumber}
There exists a \misa $I'$ of $O$ such that there is no arc crossing $v$ in $I'$.
\end{lemmawithoutnumber}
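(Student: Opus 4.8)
The plan is to start from an arbitrary \misa $I$ of $O$ and strip away the crossings of $v$ one at a time, so the whole argument is an induction on the number of arcs of $I$ that cross $v$. The base case (no arc crossing $v$) is exactly the desired $I'$. For the inductive step I would produce, from a \misa $I$ having at least one arc crossing $v$, another independent set $I'$ obtained by swapping a single arc for a single arc, so that $|I'|=|I|$ (hence $I'$ is again a \misa) and $I'$ has strictly fewer arcs crossing $v$. Recall that this is what makes the dynamic program of Section~\ref{secPolyMCPS} correct: cutting $v$ into $v_0,v_l$ turns $O$ into the path $P$, and the arcs of $P$ are precisely the arcs of $O$ that do not cross $v$, so a \misa avoiding $v$ transfers verbatim to $P$.

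Two structural observations drive the swap. First, in a circle $v$ is incident to exactly one black and one gray edge, so every arc crossing $v$ must traverse both of them; hence any two crossing arcs share these two edges, fail to be edge-disjoint, and therefore (by the definition of overlap) one is contained in the other. Thus the crossing arcs of $I$ form a nested family $U_1\supsetneq\cdots\supsetneq U_t$. Second, the complement in $O$ of any arc $U$ is again an arc: $O$ is an alternating cycle with equally many black and gray edges, an arc carries equally many of each, so the remaining sub-path does too, it is alternating, and it joins the same two equal-colored endpoints; moreover this complement crosses $v$ exactly when $U$ does \emph{not}. So complementing a crossing arc yields a non-crossing arc occupying the gap between its endpoints.

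To remove one crossing I would take the outermost crossing arc $U_1$ and aim to replace it by its complement arc. The inner crossing arcs hug $v$ while the complement sits in the middle gap, so they are edge-disjoint, and the only threat to independence is the ``no coinciding ends'' clause of overlap: a non-crossing arc of $I$ sharing an endpoint with $U_1$ would have an end coinciding with an end of the naive complement. This is exactly the situation the \emph{chain} construction is designed for. I would grow $U_1$ into a maximal chain $H$ by repeatedly appending the non-crossing arcs of $I$ that abut it at a common endpoint, marching inward along the gap from each of $U_1$'s two endpoints; since every glued endpoint carries the endpoint color of $U_1$, this $H$ is a legitimate chain and still crosses $v$. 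I then set $I'=(I\setminus\{U_1\})\cup\{H^{c}\}$, where $H^{c}$ is the complement arc of $H$, keeping all other arcs (in particular the non-crossing chain members). The ends of $H^{c}$ are the ends of $H$, which by maximality of the chain coincide with no remaining arc's end, so $I'$ is independent, has the same size as $I$, and has one fewer arc crossing $v$.

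The main obstacle is precisely this end-bookkeeping. I expect the delicate points to be: showing that $H^{c}$ overlaps nothing at all -- ruling out a non-chain arc that straddles one of its ends, which I would dispatch by comparing it against the chain member covering that region (any such straddling arc would share an end with that member, contradicting independence of $I$); and establishing termination, which follows because each chain extension strictly lengthens $H$ inside a fixed gap, so the process halts. The degree-$2$ structure of a circle and the black/gray balance of arcs are what make both the nesting fact and the complement-is-an-arc fact clean, and these are the features I would lean on throughout.
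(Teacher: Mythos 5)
Your overall strategy---peel off the unique maximal arc crossing $v$ and trade it for the complement of a maximal chain---is essentially the paper's, and your supporting observations (the nesting of the arcs crossing $v$, the fact that the complement of an arc or chain is again an arc, the straddling analysis via pairwise non-overlap) are sound. But there is a genuine hole in the swap itself: your replacement arc $H^{c}$ can be empty. You grow the chain $H$ \emph{outward} from $U_{1}$ into the gap, and nothing prevents the chain (or even $U_{1}$ alone) from wrapping around the entire circle, in which case there is no complement arc to add and $|I'|=|I|-1$, so $I'$ is no longer a \misa. Concretely, take the four-edge circle $v\!-\!e_{1}\!-\!p\!-\!e_{2}\!-\!a\!-\!e_{3}\!-\!q\!-\!e_{4}\!-\!v$ with $col(p)=col(q)$ and $v,a$ of other colors; then $I=\big\{\{e_{1},e_{4}\},\{e_{2},e_{3}\}\big\}$ is a \misa, $U_{1}=\{e_{1},e_{4}\}$ crosses $v$, the arc $\{e_{2},e_{3}\}$ abuts $U_{1}$ at both $p$ and $q$, so the chain is the whole circle and your construction stalls. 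An even smaller failure: a two-edge circle whose unique \misa is the whole-circle arc anchored at a vertex $a\neq v$; its complement is empty.

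The paper avoids exactly this by orienting the chain the other way. It first deletes $U$, and only if $v$ still lies on some remaining arc does it take a maximal chain $H$ through $v$ \emph{inside} $U$; maximality of $U$ forces $H$ to avoid the ends of $U$, so $H$ is a proper sub-path with distinct endpoints and a non-empty complement, and that complement (which contains the complement of $U$) is the arc added back. When $v$ lies on no remaining arc, the paper instead adds the whole-circle arc whose both endpoints are $v$: it does not cross $v$, and its ends (the two edges at $v$) coincide with no remaining arc's ends. Your argument needs one of these two devices---or an explicit treatment of the ``chain wraps around'' case---to go through; as written, the inductive step can lose an arc.
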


\begin{proof}
If there is no \misa having an arc crossing $v$ then we are done. 
Otherwise take a \misa $I$ of $O$ having at least one arc crossing $v$.
All such arcs include the edges adjacent to $v$, thus there exists the single maximal arc $U$ in $I$ crossing $v$ that we remove from $I$.
If after this $v$ does not belong to any arc in $I$ then we can include in $I$ an arc whose both endpoints are $v$ to obtain an independent set of arcs $I'$ with no arc crossing $v$. 
Otherwise we take a maximum length chain $H$ in $I$ including $v$. 
Due to the maximality of $U$ we obtain that $H$ is included in $U$ and does not include its ends, 
which means that the endpoints of $H$ do not coincide. 
We take an edge $e$ of $O$ adjacent to an endpoint of $H$ but not belonging to $H$ and show that $e$ does not belong to any arc in $I$.
$e$ can not be an end of an arc as this arc would intersect with an arc in $H$ or could extend it. 
If $e$ belongs to an arc $V$ without being its end, then all of $H$ must belong to $V$, which contradicts the maximality of $H$.
We obtain that an arc joining the endpoints of $H$ and not crossing $v$ can be included in $I$ giving an independent set of arcs $I'$ equal in size to $I$ and having the number of arcs crossing $v$ smaller than $I$. Iterating this process we obtain a \misa with no arcs crossing $v$.     
\end{proof}

\subsection{Lemma ~\ref{dcjTObrekapointgraph}}
\begin{lemmawithoutnumber}
For the DCJ scenarios transforming genome $A$ into $B$ the minimum length is $l(G(A,B))=e(G(A,B))-c(G(A,B))$, the minimum cost is $l(J)=e(J)-c(J)$ with $J=J(G(A,B),col)$ and the minimum cost of a DCJ scenario of minimum length is $\mlps(G(A,B)$.
\end{lemmawithoutnumber}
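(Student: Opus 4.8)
The plan is to exhibit a cost- and length-preserving bijection between DCJ scenarios transforming $A$ into $B$ and 2-break scenarios on the breakpoint graph $G(A,B)$ that reach a terminal graph, after which all three equalities follow immediately from the general machinery already established.

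First I would set up the correspondence between a single DCJ and a single 2-break on the black edges of $G(A,B)$. Under the breakpoint graph construction an internal adjacency $\{a,b\}$ of $A$ is the black edge $(a,b)$, an external adjacency $\{a\}$ is the black edge $(a,\circ)$, and the artificial black loops $(\circ,\circ)$ supply the slack needed to change the number of telomeres. With these identifications each of the four DCJ types of Definition~\ref{dcjgenome} is exactly a 2-break on the black edges: $\{a,b\},\{c,d\}\rightarrow\{a,c\},\{b,d\}$ is $(a,b),(c,d)\rightarrow(a,c),(b,d)$; $\{a,b\},\{c\}\rightarrow\{a,c\},\{b\}$ is $(a,b),(c,\circ)\rightarrow(a,c),(b,\circ)$; $\{a,b\}\rightarrow\{a\},\{b\}$ is $(a,b),(\circ,\circ)\rightarrow(a,\circ),(b,\circ)$, consuming a black loop; and $\{a\},\{b\}\rightarrow\{a,b\}$ is $(a,\circ),(b,\circ)\rightarrow(a,b),(\circ,\circ)$, creating one. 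Conversely every 2-break on the black edges is realized by a DCJ; the only point needing a check is that a black loop is available whenever a splitting DCJ is applicable, which holds because the existence of an internal adjacency forces the telomere count below the maximum $2n=d^{b}(\circ)$, leaving at least one loop at $\circ$. Since the colored adjacency of $\{a,b\}$ is $\{col(a),col(b)\}$ and that of $\{a\}$ is $\{col(a),\circ\}$, treating $\circ$ as a vertex carrying its own color makes the DCJ cost of an operation equal to the colored 2-break cost $\varphi$ of the matching 2-break, so the bijection preserves cost.

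Next I would note that transforming $A$ into $B$ amounts to turning the multiset of black adjacencies into the adjacency set of $B$, which is exactly the condition that the associated 2-break scenario reaches a terminal graph (the loops at $\circ$ then automatically agree as well). Degree preservation under 2-breaks keeps $d^{b}(v)=d^{g}(v)=1$ for every $v\neq\circ$, so the black edges describe a genome at every step and the correspondence is between genuine scenarios. Hence the minimum length and minimum cost of DCJ scenarios $A\to B$ coincide with those of 2-break scenarios for $G(A,B)$, and minimum-length DCJ scenarios correspond precisely to parsimonious 2-break scenarios.

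The three equalities then drop out of earlier results. The minimum length equals $e(G(A,B))-c(G(A,B))$ by Theorem~\ref{minlength2break}. The minimum cost equals $l(J)$ with $J=J(G(A,B),col)$ by Theorem~\ref{mlsforgraph}, and applying Theorem~\ref{minlength2break} to $J$ rewrites this as $e(J)-c(J)$. The minimum cost among minimum-length scenarios is $\mlps(G(A,B))$ by the definition of the problem. I expect the main obstacle to be the bookkeeping in the first step: matching the telomere-creating and telomere-destroying DCJ types to 2-breaks through the special vertex $\circ$ and its loops, and verifying that the correspondence is exhaustive and cost-preserving in both directions.
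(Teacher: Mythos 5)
Your proposal is correct and follows essentially the same route as the paper's own proof: a length- and cost-preserving correspondence between DCJs on $A$ and 2-breaks on $G(A,B)$ (with $\circ$ given its own color and the black loops $(\circ,\circ)$ absorbing telomere changes), after which the three equalities follow from Theorem~\ref{minlength2break}, Theorem~\ref{mlsforgraph}, and the definition of \mlps. Your explicit case analysis of the four DCJ types and the check that a black loop is always available for a splitting DCJ are slightly more detailed than the paper's treatment, but the argument is the same.
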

\begin{proof}
$G(A,B)$ is constructed in such a way that for every DCJ $A\rightarrow A'$ the transformation $G(A,B)\rightarrow G(A',B)$ is a 2-break. Notably, a DCJ $\{a,b\}\rightarrow\{a\},\{b\}$ results in a transformation $(a,b),(\circ,\circ)\rightarrow (a,\circ),(b,\circ)$, as the construction of a breakpoint graph guarantees that there are enough black loops $(\circ,\circ)$ to realize such a 2-break. For any 2-break $G(A,B)\rightarrow G'$ with $G'\neq G(A,B)$ there exists a DCJ $A\rightarrow A'$ such that $G(A',B)=G'$. Since $G(B,B)$ is terminal, it follows that the minimum length of a scenario transforming $A$ into $B$ is $l(G(A,B))$.

Vertices of $G(A,B)$ are the gene extremities of $A$ and $B$ plus an additional vertex $\circ$. 
We color $\circ$ with a unique color.
The rest of the vertices of $G(A,B)$ have the colors of their gene extremities.
By construction, the cost of a DCJ $A\rightarrow A'$ is equal to the cost of a 2-break $G(A,B)\rightarrow G(A',B)$.
On the other hand for a 2-break $G(A,B)\rightarrow G'$ there exists a DCJ $A\rightarrow A'$ of the same cost such that $G(A',B)=G'$.
This means that the minimum cost of a scenario transforming $A$ into $B$ is equal to the minimum cost of a scenario for $G(A,B)$ and we conclude using Theorem~\ref{mlsforgraph} that this cost is $l(J)=e(J)-c(J)$ with $J=J(G(A,B),col)$. 

Let us denote the minimum cost of a parsimonious DCJ scenario transforming $A$ into $B$ by $w_{min}$.
In the two previous paragraphs we have seen that for a DCJ scenario of length $l$ and cost $w$ transforming $A$ into $B$ there exists a 2-break scenario on  
$G(A,B)$ of length $l$ and cost $w$. This means that $\mlps(G(A,B))\leq w_{min}$. 
On the other hand we have also seen that for a 2-break scenario on $G(A,B)$ of length $l$ and cost $w$ there exists a DCJ scenario of length $l$ and cost $w$ transforming $A$ into $B$ and this establishes the equality $\mlps(G(A,B))=w_{min}$.
\end{proof}

\subsection{Lemma~\ref{weightinggraph}}
\begin{lemmawithoutnumber}
The weights of the edges of $H$ can be assigned in $O(n^4)$ time. 
\end{lemmawithoutnumber}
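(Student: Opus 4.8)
The plan is to recognize that each edge weight of $H$ is the outcome of a single \mlps computation on a simple cycle, to bound the cost of that computation with the tools already in hand, and then to show that, although $H$ has $\Theta(n^2)$ edges, the total work collapses to $O(n^4)$ by a convexity estimate on the sizes of the paths.

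First I would pin down the structure of one weight. Fix an $AA$ path $a$ and a $BB$ path $b$, each running from $\circ$ to $\circ$, and consider their union $a\cup b$. Every gene-extremity vertex keeps black and gray degree $1$, whereas $\circ$ gains black degree $2$ (the two black ends of $a$) and gray degree $2$ (the two gray ends of $b$); thus $a\cup b$ is a simple cycle with $d(a\cup b)=1$, its unique degree-$2$ vertex being $\circ$. Applying Theorem~\ref{costOFsimple}, $\mlps_{\varphi}(a\cup b)$ is the minimum of $\mlps_{\varphi'}$ over the Eulerian circles obtained by splitting $\circ$ into two vertices of the same color; since $d(a\cup b)=1$ there are at most $2^{1}=2$ such circles, so each weight amounts to at most two circle computations. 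For each resulting circle $O$ on $m$ vertices, Theorem~\ref{nestedscenario} reduces $\mlps(O)$ to $e(O)-|I|$ for a \misa $I$, and the dynamic program of Section~\ref{secPolyMCPS} produces such an $I$ in $O(m^3)$ time. As the circle built from $(a,b)$ has $m=O(|a|+|b|)$ vertices, computing the weight of that edge of $H$ takes $O\big((|a|+|b|)^3\big)$ time.

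The decisive step is the global accounting. Write the $AA$-path sizes as $a_1,\dots,a_p$ and the $BB$-path sizes as $b_1,\dots,b_q$; since the paths are edge-disjoint subgraphs of $G(A,B)'$ we have $\sum_i a_i+\sum_j b_j\le e(G(A,B))=O(n)$ and $p,q=O(n)$. The total time is $O\big(\sum_{i,j}(a_i+b_j)^3\big)$, and using $(x+y)^3\le 4(x^3+y^3)$,
\[
\sum_{i=1}^{p}\sum_{j=1}^{q}(a_i+b_j)^3
\;\le\; 4\sum_{i,j}\big(a_i^3+b_j^3\big)
\;=\; 4\Big(q\sum_i a_i^3+p\sum_j b_j^3\Big).
\]
By the elementary inequality $\sum_i a_i^3\le\big(\sum_i a_i\big)^3=O(n^3)$ together with $q=O(n)$, the first summand is $O(n^4)$, and symmetrically for the second; hence the weights are assigned in $O(n^4)$ time.

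The main obstacle is precisely this count: a naive estimate multiplies the $\Theta(n^2)$ edges of $H$ by the $O(n^3)$ worst-case cost of a single circle, yielding $O(n^5)$, so the reduction to $O(n^4)$ rests entirely on the fact that the path sizes sum to $O(n)$ combined with $\sum_i a_i^3\le(\sum_i a_i)^3$. A subsidiary point worth checking is that splitting $\circ$ yields genuinely connected Eulerian circles—one should confirm which of the two gray pairings closes up into a single cycle, and handle the degenerate case in which $a$ or $b$ is a loop at $\circ$—but since this only changes the per-edge work by a constant factor it does not affect the final bound.
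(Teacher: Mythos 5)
Your proposal is correct and follows essentially the same route as the paper: each weight is an \mlps computation on a simple cycle $a\cup b$ with the single degree-two vertex $\circ$, hence at most two $O\big((a_i+b_j)^3\big)$ circle computations, and the total is bounded by summing $(a_i+b_j)^3$ over all pairs using the fact that the path sizes sum to $O(n)$. The only cosmetic difference is that you bound the sum via $(x+y)^3\le 4(x^3+y^3)$ while the paper expands the binomial explicitly; both yield the same $O(n^4)$ accounting.
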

\begin{proof}$a_{1},\ldots,a_{x}$ denotes the sizes of $AA$ paths and $b_{1},\ldots,b_{y}$ denotes the sizes of $BB$ paths with
$\sum_{i=0}^{x}a_{i}=|AA|$ and $\sum_{j=0}^{y}b_{j}=|BB|$.
By construction, $d^{b}(G(A,B),\circ)=d^{g}(G(A,B),\circ)=2n$, meaning that $x,y\leq n$. 
\mlps cost of a union of two paths having $a_{i}$ and $b_{j}$ vertices can be computed in $2c(a_{i}+b_{j})^{3}$ steps as \mlps cost of a circle of size $n$ can be computed in $cn^3$ steps for some constant $c$ and we need to compute this cost for two Eulerian circles. 
We can compute \mlps cost for every pair of $AA$ and $BB$ path in the number of steps equal to
\begin{align*}
\sum_{i=0}^{x}\sum_{j=0}^{y}2c(a_{i}+b_{j})^{3}=&2c\sum_{i=0}^{x}\sum_{j=0}^{y} (a_{i}^3+3a_{i}^{2}b_{j}+3b_{j}^{2}a_{i}+b_{j}^3)\\
=&2c(y\sum_{i=0}^{x}a_{i}^{3}+x\sum_{j=0}^{y}b_{j}^{3}+3|BB|\sum_{i=0}^{x}a_{i}^{2}+3|AA|\sum_{j=0}^{x}b_{j}^{2})
\end{align*}
The terms $y,x,|AA|$ and $|BB|$ are clearly $O(n)$ and we obtain the worst-case time-complexity $O(n^4)$ for weighting the bipartite graph.
\end{proof}

\end{document}